\newtheorem{proposition}{Proposition}
\newtheorem{lemma}{Lemma}
\newtheorem{remark}{Remark}
\numberwithin{equation}{section}
\definecolor{light}{gray}{.9}
\def\be{\begin{equation}}
\def\ee{\end{equation}}
\def\bea{\begin{eqnarray}}
\def\eea{\end{eqnarray}}
\def\ni{\noindent}
\def\nn{\nonumber}
\def\s{\sigma}
\def\a{\alpha}
\def\d{\delta}
\def\b{\beta}
\def\l{\lambda}
\def\t{\tau}
\def\R{\mathbb{R}}
\def\E{\mathbb{E}}
\def\SSS{\mathscr{S}}
\def\diag{\operatorname{diag}}
\def\Ker{\operatorname{Ker}}
\newcommand{\ie}{\textit{i.e. }}
\newcommand{\meanv}[1]{\left\langle#1\right\rangle}
\newcommand{\nocontentsline}[3]{}
\newcommand{\tocless}[2]{\bgroup\let\addcontentsline=\nocontentsline#1{#2}\egroup}
\DeclareMathSymbol{\leqslant}{\mathalpha}{AMSa}{"36} 
\DeclareMathSymbol{\geqslant}{\mathalpha}{AMSa}{"3E} 
\DeclareMathSymbol{\eset}{\mathalpha}{AMSb}{"3F}     
\renewcommand{\leq}{\;\leqslant\;}                   
\renewcommand{\geq}{\;\geqslant\;}                   
\newcommand{\OOO}[1]{O \left(#1\right)}
\def\Var{\operatorname{Var}}
\title{Non--Convex Multipartite Ferromagnets}
\date{\today}
\author{Giuseppe Genovese}
\address{Giuseppe Genovese: Institut f\"ur Mathematik, Universit\"at Z\"urich,
Winterthurerstrasse 190, CH-8057 Z\"urich, Switzerland.}
\email{giuseppe.genovese@math.uzh.ch}
\author{Daniele Tantari}
\address{Daniele Tantari: Centro Ennio de Giorgi, Scuola Normale Superiore, Piazza dei Cavalieri 3, I-56100 Pisa (Italy).}
\email{daniele.tantari@sns.it}
\subjclass[2000]{82B26, 82D40, 60F05, 49L25
}
\begin{document}
\maketitle

\begin{abstract}

\ni We investigate a multipartite ferromagnetic model without self-interactions between spins of the same party, so that the Hamiltonian is not a definite quadratic form of the magnetisations. We find the free energy and study the phase transition for all zero external fields. Moreover in the bipartite case we analyse the fluctuations of the rescaled magnetisations, below and at the critical point, and we study the phase transitions with non-zero magnetic fields.

\vspace{5mm}

\end{abstract}

\section{Introduction}\label{intro}

The mean field ferromagnet, or Curie-Weiss model (CW henceforth),  is a paradigm in statistical physics, as a very simple model exhibiting the elemental phenomenon of spontaneous symmetry breaking. For this reason it has been intensively studied and used as a toy model to test many methods in equilibrium statistical mechanics. 

We are concerned here with a variation of the usual ferromagnets, namely the CW models defined on a multipartite graph. Such systems model quite naturally any situation in which there are (finitely) many different agents involved, positively correlated. It is noteworthy that the bipartite ferromagnet provides a good mean field approximation for Ne\'el ferrimagnetism, as shown in \cite{BS80}.

Therefore, albeit the main motivation of our investigation comes from the attention given recently to multipartite spin glasses \cite{bgg,ahp,pan}, multipartite ferromagnets are by themselves of a certain interest and, mostly in the bipartite case, they have been already studied in \cite{gallo,gb2,bgg,fedcon,fedele1,fedele2,pizzo}. 

In this paper we deal with Hamiltonians which, as quadratic forms in the magnetisations, are not of definite sign. This feature brings some new issues in the investigation of multipartite systems, no matter if disordered or ferromagnetic. Therefore, many of the ideas used in this paper can find a suitable analogue in the analysis of the (much more complicate) disordered models.

\bigskip

We consider $\nu\geq2$ sets of $N_a$, $a=1,\dots,\nu$ spin variables. We denote by $\sigma^a_i$, $i=1, ..., N_a$ the variables in the $a- $th subset. Throughout the paper we will always make the following assumptions: all the spins are i.i.d r.vs, with symmetric probability distribution fulfilling 
\be\label{cond:CW-subGauss}
\begin{cases}
\Var_a[\s]=\E_{\s^a}[\s^2]=1\,,\\
\E_a[\s^4]-3<0\,,\\
r_{a}(t):=\E_{\s^a}[e^{t\s^2}]<\infty,\qquad\forall t>0\,.
\end{cases}
\ee
In principle the distributions could be different: we allow different statistics for each party. 
We also define the cumulant generating function as
\be\label{eq:phi}
\phi_a(t):=\log\E_{\s^a} [e^{t\s^a}]\,.
\ee
It is immediate to check that $\phi_a(t)$ is an even, analytic and uniformly convex function and 
\be\label{eq:espansione-phi}
\phi_a(t)=\frac{t^2}{2}+\sum_{k\geq 2} \frac{P^a_{2k}}{(2k)!} t^{2k}\,,
\ee
where $P^a_{2k}$ is the $2k$-th cumulant for the distribution of $\s^a$. We notice $P^a_4<0$, $a=1,\dots,\nu$.


Let $N:=\sum_{a=1}^{\nu}N_a$ to be the total number of spin. We set (with a little abuse of notation) the relative size of the subset to be $N_a/N=\alpha_a\in(0,1)$. More precisely, to avoid a trivial behaviour of the model, we perform the 
thermodynamic limit $N\to\infty$ with the prescription
$\lim_N
\frac{N_a}{N}=:\alpha_a$.

\ni For each party labeled by $a=1,\cdots,\nu$ we can define the partial magnetisation $m_a$ as
\be\label{eq:m_x}
m_a:=\frac{1}{N_a}\sum_{i=1}^{N_a}\s^a_i.
\ee

\ni We let the spins interact via the Hamiltonian $H_N(\boldsymbol{\sigma}^1,\cdots,\boldsymbol{\s}^\nu)$:
\be\label{eq:H-multi}
H_N(\boldsymbol{\sigma}^1,\cdots,\boldsymbol{\s}^\nu):=-N\sum_{(a,b)}^{\nu}\a_a\a_b m_am_b
- N\sum_{a=1}^{\nu}h_a\a_a m_a.
\ee
Spins in each subsystem interact only with
spins in the other one, but not among themselves. The Hamiltonian (\ref{eq:H-multi}) (with all zero external fields) can be written as a quadratic form in the magnetisation vector as
\be\label{eq:non-convex}
H_N=-N(m,{\bf J}m)\,,
\ee
where the interaction matrix is
\be
J_{ab}:=
\begin{cases}
0 & a=b\,;\\
\a_a\a_b& a\neq b\,.\\
\end{cases}
\ee
This is not a definite quadratic form, that reflects the non-convex nature of the model.

Partition
function, pressure and free energy per site of the model are
defined as usual by
\begin{eqnarray}
Z_N(\beta, h_1,\cdots, h_\nu)&:=&\mathbb{E}_{\sigma^1}\cdots\mathbb{E}_{\sigma^\nu}e^{-\beta H_N(\boldsymbol{\sigma}^1,\cdots,\boldsymbol{\s}^\nu)}\,,\\
A_N(\beta, h_1,\cdots, h_\nu)&:=&\frac{1}{N}\log Z_N(\beta, h_1,\cdots, h_\nu)\,,\\
f_N(\beta, h_1,\cdots, h_\nu)&:=&-\frac{1}{\b}A_N(\beta, h_1,\cdots, h_\nu)\,.
\end{eqnarray}
Moreover we put
\be
A(\b,\a,h):=\lim_N A_N(\beta, h_1,\cdots, h_\nu)\,.
\ee
The Gibbs measure associated to this system is
$
Z_N^{-1}e^{-\b H_N}\,.
$
For any observable $O$ we denote with $\meanv{O}_N$ its mean value and with $\meanv{O}$ the same in the thermodynamic limit $N\to\infty$. A special role will be played by the partial magnetisations, which can be arranged in a vectorial form (in $\R^\nu$): we put $m:=(m_1,\dots,m_\nu)$ and analogously $\meanv{m}:=(\meanv{m_1},\dots,\meanv{m_\nu})$.

\bigskip

Our main achievements are listed as follows:

\bigskip

\begin{itemize}

\item[{\em i)}] We find the pressure of the model in terms of a variational principle (Proposition \ref{Prop:pressure}). This is a minimum principle and so it is reversed with respect of the usual ferromagnets. It can be formulated w.r.t. the real magnetisations (see (\ref{eq:A})) or w.r.t.  tilted order parameters(see (\ref{eq:A-m'})) and in the bipartite case it is equivalent to the $\min\max$ of \cite{gb2,bgg}.\\

\item[{\em ii)}] We study the spontaneous symmetry breaking for zero external fields, finding a multidimensional generalisation of the CW phase transition (see Propositions \ref{prop:b<bc} and \ref{prop:b>bc}).\\

\item [{\em iii)}] We find the fluctuations of the rescaled magnetisations in the bipartite model both in the paramagnetic phase (Proposition \ref{prop:Flutt-Bip}) and at the critical point (Proposition \ref{Prop:bip-fluct-crit}). \\

\item [{\em iv)}] We study with the aid of numerical simulations the phase diagram of the bipartite model as $h_1h_2<0$. We see the appearance of non trivial stable, unstable and metastable states as we tune the magnitude of the external fields (see Section \ref{SezioneBip}). 

\end{itemize}

\bigskip

The core of this paper is Section \ref{core}. There we compute the free energy of the model using the mapping into a Hamilton--Jacobi equation, combined with an appropriate comparison argument. The Hamilton--Jacobi method is quite powerful for this kind of models and it has a singular history, as it was discovered independently by several authors through the last thirty years: Brankov and Zagrebnov \cite{russi}, Newman \cite{new}, Choquard and Wagner \cite{losanna} and Genovese and Barra \cite{gb}. Moreover the work of Guerra \cite{sum}, which however deals with the replica symmetric solution of the Sherrington Kirkpatrick model rather than ferromagnetic models, definitely deserves to be mentioned (in the same direction, see also \cite{gino}). 

In our context however this technique cannot be applied tout-court. We need a suitable interpolation between the $\nu$-partite system and $\nu$ CW models at different temperatures. In this way we can make the interaction to be positive definite and thereby map the problem into a viscous Hamilton-Jacobi equation with convex (quadratic) Hamiltonian. The idea is already in \cite{bgg}.  The study of the phase transition is entirely analytical at zero external fields, as we face it as a bifurcation problem.  However this is not the sole situation in which a phase transition can occur. We discuss further this delicate point in Section \ref{Sect-Discussion}.

In Section \ref{SezioneBip} we focus on the properties of the bipartite model. This section relies very much on Appendix \ref{sec:genferr}. The key idea is to use the equivalence between the bipartite models and some generalised ferromagnetic models. This comes from the correspondence between sub-Gaussian probability measures and smooth convex functions, which stems from Bernstein's theory of completely monotonic functions (see \cite{fe}). 

In this way we find again the results of Section \ref{core} for $\nu=2$. Furthermore we establish ferromagnetic fluctuations of the rescaled magnetisations (as found for the CW model in \cite{SG}\cite{ellis}) for the bipartite model in the paramagnetic phase and at the critical point, by the analysis of those of the generalised ferromagnets. These results extend the achievements of \cite{fedcon}\cite{fedele2} to the non convex case. 

In the second part of Section \ref{SezioneBip} we analyse the bipartite model with both parties made of Bernoulli $\pm1$ and non zero external fields such that $h_1h_2<0$. By the self-consistent equations for the magnetisations, we can readily deduce two critical lines (Fig. \ref{fig1}), and we study the behaviour of the system when varying the fields along those lines. We find four different regimes (and three critical values for the field magnitudes), corresponding to different energy landscapes (Fig. \ref{fig2}, \ref{fig4} and \ref{fig5}). We investigate numerically the stability of the equilibrium states and the nature of the associated phase transitions.

Since the generalisation of the CW model to other convex interactions is not so immediate (albeit we find anything striking from the physical viewpoint), we give an account of it in Appendix \ref{sec:genferr}.

Finally some questions left open by our analysis are discussed in Section \ref{Sect-Discussion}.

\section{The Free Energy and Critical Behaviour}\label{core}

We have already emphasised that the Hamiltonian is not a definite quadratic form, that is the main difficulty of these models. Nonetheless we can make it so, by adding a suitably strong counterterm. Let $c>0$ and
\be\label{eq:Jc}
{\bf J}^c:= c\diag(\a_1^2,\dots,\a_\nu^2)-{\bf J}\,.
\ee
We have
\begin{lemma}\label{lemma:radice}
The quadratic form (\ref{eq:Jc}) is positive definite iff $c\geq\nu-1$. We have ${\bf J}^c={\bf P}^T{\bf P}$, where the rows of $\bf {P}$ are the linearly independent vectors 
\begin{eqnarray}\label{base}
v^1&=&\frac{\sqrt{c+1-\nu}}{\sqrt{\nu}}(\a_1,\dots,\a_\nu)\nonumber\\
v^2 &=&\frac{\sqrt{c+1}}{\sqrt{2}}(\a_1,-\a_2,0,\dots,0)\nonumber \\
\vdots&=&\qquad\vdots\nonumber \\
v^{a} &=& \frac{\sqrt{c+1}}{\sqrt{a(a-1)}}(\a_1,\dots,\a_{a-1},\a_a-a\a_a,0,\cdots,0)\nonumber\\
\vdots&=&\qquad\vdots\nonumber \\
v^{\nu} &=& \frac{\sqrt{c+1}}{\sqrt{\nu(\nu-1)}}(\a_1,\dots,\a_{\nu-1},\a_\nu-\nu\a_\nu)\,.
\end{eqnarray} 
\end{lemma}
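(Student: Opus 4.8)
The plan is to strip off the $\a$-weights and reduce the question to a rank-one perturbation of the identity. Writing $D:=\diag(\a_1,\dots,\a_\nu)$, which is invertible because every $\a_a\in(0,1)$, one reads off directly from (\ref{eq:Jc}) that
\[
\mathbf{J}^c = D\,K\,D,\qquad K:=(c+1)\II-\mathbf{1}\mathbf{1}^T,
\]
with $\mathbf{1}=(1,\dots,1)^T$: the diagonal of $DKD$ is $\a_a^2(c+1-1)=c\a_a^2$ and the off-diagonal entries are $-\a_a\a_b$, exactly matching $\mathbf{J}^c$. Since $D=D^T$ is invertible, $\mathbf{J}^c$ and $K$ are congruent, hence share the same signature by Sylvester's law of inertia; in particular $\mathbf{J}^c$ is positive definite iff $K$ is.

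The spectrum of $K$ is then immediate. As $\mathbf{1}\mathbf{1}^T$ has eigenvalue $\nu$ on $\mathbf{1}$ and $0$ on $\mathbf{1}^\perp$, the matrix $K$ has the simple eigenvalue $c+1-\nu$ (eigenvector $\mathbf{1}$) together with the eigenvalue $c+1$ of multiplicity $\nu-1$ (eigenspace $\mathbf{1}^\perp$). Since $c>0$ the latter is always positive, so $K$ is positive definite iff $c+1-\nu>0$. This yields the stated threshold, the borderline $c=\nu-1$ producing a single null direction and thus only positive semidefiniteness.

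For the factorisation I would diagonalise $K$ in the Helmert orthonormal basis
\[
w^1=\tfrac{1}{\sqrt\nu}(1,\dots,1),\qquad w^a=\tfrac{1}{\sqrt{a(a-1)}}\bigl(1,\dots,1,-(a-1),0,\dots,0\bigr)\quad(a\geq2),
\]
where the entry $-(a-1)$ sits in slot $a$. These vectors are orthonormal, with $w^1$ parallel to $\mathbf{1}$ and $w^a\perp\mathbf{1}$ for $a\geq2$, so they diagonalise $K$ with the eigenvalues found above. The spectral decomposition $K=\sum_a\lambda_a w^a(w^a)^T$, with $\lambda_1=c+1-\nu$ and $\lambda_a=c+1$ for $a\geq2$, is precisely $K=Q^TQ$ for the matrix $Q$ whose rows are $q^a:=\sqrt{\lambda_a}\,w^a$. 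Then $\mathbf{J}^c=DQ^TQD=(QD)^T(QD)$, so $\mathbf{P}:=QD$ works, its $a$-th row being the entrywise product of $q^a$ with $(\a_1,\dots,\a_\nu)$. Carrying out these products reproduces exactly the $v^a$ of (\ref{base}): for $a=1$ one obtains $\frac{\sqrt{c+1-\nu}}{\sqrt\nu}(\a_1,\dots,\a_\nu)$, and for $a\geq2$ one obtains $\frac{\sqrt{c+1}}{\sqrt{a(a-1)}}(\a_1,\dots,\a_{a-1},-(a-1)\a_a,0,\dots,0)$, using $\a_a-a\a_a=-(a-1)\a_a$.

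The one genuinely computation-heavy step is verifying orthonormality of the Helmert vectors, equivalently that $\sum_a\lambda_a w^a(w^a)^T$ collapses back to $K$; this is a short but slightly fiddly telescoping of the overlaps $w^a\cdot w^b$. Linear independence of the $v^a$ then comes for free: when $c>\nu-1$ all $\lambda_a>0$, so the $q^a$ are orthogonal and nonzero, and multiplying by the invertible $D$ on the right preserves independence of the rows of $\mathbf{P}$. At the boundary $c=\nu-1$ the row $v^1$ vanishes and only the remaining $\nu-1$ vectors stay independent, consistent with $\mathbf{J}^c$ being merely semidefinite there.
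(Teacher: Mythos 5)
Your proposal is correct and follows essentially the same route as the paper: factor out the diagonal matrix of the $\a_a$'s, reduce to the matrix $(c+1)\II-\mathbf{1}\mathbf{1}^T$, read off its spectrum, and build $\mathbf{P}$ from the Helmert basis scaled by the square roots of the eigenvalues. Your treatment is in fact slightly more careful than the paper's about the $\sqrt{c+1-\nu}$ normalisation of the first row and about the borderline case $c=\nu-1$, where the form is only semidefinite and $v^1$ degenerates.
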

\begin{proof}
Let ${\bf A}:=\diag(\a_1,\dots,\a_\nu)$. We see immediately that ${\bf J}^c={\bf A}{\bf T}^c{\bf A}$ with the matrix ${\bf T}^c$ defined by
\be
T^c_{ab}:=
\begin{cases}
c & a=b\,;\\
-1& a\neq b\,.\\
\end{cases}
\ee
Therefore we must look at ${\bf T}^c$. The secular equation reads
$$
(c+1-\lambda)w - \sum_a w_a=0\,.
$$
If $\sum_a w_a\neq 0$ then $w^1=\nu^{-1/2}(1,\dots,1)$ is constant and $\lambda= c+1-\nu$. We need to choose $c\geq \nu -1$ if we want $T$ to be positive defined. The other $\nu -1$ eigenvectors lie in the subspace $\sum_a w_a= 0$ and they all correspond to the eigenvalue $\lambda = c+1$. We see by a straightforward computation that the vectors
\begin{eqnarray}\label{base}
w^2 &=&\frac{\sqrt{c+1}}{\sqrt{2}}(1,-1,0,\dots,0)\nonumber \\
\vdots&\vdots&\qquad\vdots\nonumber \\
w^{a} &=& \frac{\sqrt{c+1}}{\sqrt{a(a-1)}}(1,\dots,1,1-a,0,\cdots,0)\nonumber\\
\vdots&\vdots&\qquad\vdots\nonumber \\
w^{\nu} &=& \frac{\sqrt{c+1}}{\sqrt{\nu(\nu-1)}}(1,\dots,1,1-\nu)\,
\end{eqnarray} 
form a basis of the subspace orthogonal to $w^1$. Let ${\bf P}'$ the $\nu\times\nu$ matrix with $a$-th rows $w_a$. Then ${\bf T}^c={\bf P}'^T{\bf P}'$ and setting $\bf P={\bf P}'\bf A$ we finish the proof. 
\end{proof}

\subsection{Free energy} We give the variational formula for the pressure of the model:
\begin{proposition}\label{Prop:pressure}
Let $c\geq \nu -1$. It holds
\be\label{eq:A-m'}
A(\b,\a,h)=\min_{m'\in\R^\nu}\left(\frac{\b(m')^2}{2}+\sum_a \a_aA_{CW}\left(\b c\a_a,\b h_a-\beta(\boldsymbol{A}^{-1}\boldsymbol{P^T}m')_a\right)\right)\,.
\ee
\end{proposition}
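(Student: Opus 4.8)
The plan is to use Lemma \ref{lemma:radice} to trade the indefinite interaction for a superposition of decoupled Curie--Weiss models plus a negative definite quadratic form, and then to linearise the latter by completing the square, which is exactly the mechanism that produces the minimisation over $m'$. With $\boldsymbol A=\diag(\a_1,\dots,\a_\nu)$ and $\boldsymbol{J}=c\boldsymbol A^2-\boldsymbol{J}^c$, $\boldsymbol{J}^c=\boldsymbol P^T\boldsymbol P$, I first rewrite the Boltzmann exponent as
\[
-\b H_N=\underbrace{\frac{\b N c}{2}(m,\boldsymbol A^2 m)+\b N(h,\boldsymbol A m)}_{\text{decoupled}}\;-\;\frac{\b N}{2}(m,\boldsymbol{J}^c m).
\]
Since $\sum_{i=1}^{N_a}\s^a_i=N\a_a m_a=N(\boldsymbol A m)_a$, the underbraced part factorises over the parties into $\nu$ independent Curie--Weiss systems, the $a$-th at inverse temperature $\b c\a_a$ and external field $\b h_a$; only the last term, strictly negative definite because $c\geq\nu-1$, couples them.

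The coupling is removed by the pointwise identity (valid configuration by configuration, as $\exp$ is increasing and $(m,\boldsymbol{J}^c m)=|\boldsymbol P m|^2$)
\[
\exp\!\Big(-\tfrac{\b N}{2}(m,\boldsymbol{J}^c m)\Big)=\min_{m'\in\R^\nu}\exp\!\Big(\tfrac{\b N}{2}|m'|^2-\b N(m',\boldsymbol P m)\Big),
\]
whose minimiser is $m'=\boldsymbol P m$. Using once more $\sum_i\s^a_i=N(\boldsymbol A m)_a$ gives $\b N(m',\boldsymbol P m)=\b\sum_a(\boldsymbol A^{-1}\boldsymbol P^T m')_a\sum_i\s^a_i$, so freezing $m'$ simply shifts the field of the $a$-th factor from $\b h_a$ to $\b h_a-\b(\boldsymbol A^{-1}\boldsymbol P^T m')_a$, while the weight $\frac{\b N}{2}|m'|^2$ reproduces the $\frac{\b(m')^2}{2}$ term. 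Hence, for each frozen $m'$, the integrand is a product of $\nu$ Curie--Weiss partition functions carrying precisely the parameters of \eqref{eq:A-m'}.

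What remains is to interchange $\min_{m'}$ with $\lim_N\frac1N\log\E_\s(\cdot)$. The upper bound is free: dropping the minimum, for every fixed $m'$ one has $e^{-\frac{\b N}{2}(m,\boldsymbol{J}^c m)}\leq\exp(\frac{\b N}{2}|m'|^2-\b N(m',\boldsymbol P m))$, so taking $\frac1N\log\E_\s$, factorising, and invoking the Curie--Weiss limit (Appendix \ref{sec:genferr}) bounds $\limsup_N A_N$ by the bracket in \eqref{eq:A-m'} for each $m'$, hence by its infimum. The matching lower bound is the crux, because the minimum now sits inside the exponential and cannot be extracted for free. I would settle it by concentration: under the decoupled product ensemble the rotated order parameter $\boldsymbol P m$ concentrates (a consequence of \eqref{cond:CW-subGauss} and the uniform convexity of the $\phi_a$), so restricting $\E_\s$ to $\{\boldsymbol P m\approx y^\star\}$, with $y^\star$ the optimiser, and using the concavity of $m\mapsto-|\boldsymbol P m|^2$ recovers the upper bound; note that the reference Curie--Weiss systems may themselves be supercritical, but the Laplace argument is insensitive to this. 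Equivalently --- and this is the systematic device announced for Section \ref{core} --- the finite-$N$ pressure solves a viscous Hamilton--Jacobi equation with quadratic Hamiltonian and datum the decoupled Curie--Weiss pressure, the Hopf--Lax formula returning exactly \eqref{eq:A-m'}; the genuine obstacle throughout is the control of the $O(1/N)$ viscosity, i.e. the legitimacy of exchanging the limit with the minimisation.
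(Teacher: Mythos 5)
Your algebraic reduction is the same as the paper's: Lemma \ref{lemma:radice} together with the completion of the square $-\tfrac12|\boldsymbol Pm|^2=\min_{m'}\bigl(\tfrac12|m'|^2-(m',\boldsymbol Pm)\bigr)$ is exactly the finite-$N$ content of the interpolation $\Phi(s,t)$ and of the Hopf--Lax formula used in the paper, and your upper bound $A\leq\inf_{m'}(\cdots)$ coincides with the easy direction there. Where you diverge is the matching bound: the paper derives the exact viscous Hamilton--Jacobi equation for $\Phi_N$ and invokes vanishing-viscosity convergence to the unique viscosity solution \cite{can}, for which the Hopf--Lax representation holds with no case analysis on minimisers; you propose instead a direct Laplace/tilting argument.

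That is where the gap sits. Concentration of $\boldsymbol Pm$ under the decoupled tilted ensemble is \emph{not} a consequence of \eqref{cond:CW-subGauss} and the uniform convexity of the $\phi_a$: the reference systems are themselves Curie--Weiss models at inverse temperatures $\b c\a_a$, and when such a system is supercritical with vanishing effective field its magnetisation has a bimodal limit law, so $\boldsymbol Pm$ does not concentrate at its mean. Your restriction set $\{\boldsymbol Pm\approx y^\star\}$ has to be tied to the optimiser through the stationarity condition $m'^\star=\meanv{\boldsymbol Pm}_{\text{tilted}}$, i.e. the self-consistency \eqref{eq:selfcons-m}; precisely when that condition lands on a symmetry-breaking reference system (the shocks of the HJ picture) the tangency argument ``on the restriction set the tangent-line inequality becomes an equality'' fails at the mean and must be rerun at each of the two concentration points, checking that both realise the same value of the Hopf--Lax functional. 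None of this is fatal --- it is the Laplace-method analysis underlying the entropic principle \eqref{eq:A-energy-entropy} --- but as written ``the Laplace argument is insensitive to this'' is an assertion, not a proof: the direction of the minimax interchange you get for free is the one you already proved, and the nontrivial direction is exactly what the viscosity-solution machinery (or an explicit analysis of the stationary points) is there to supply.
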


\begin{remark}
As $c=\nu-1$ the matrix ${\bf P}$ has rank $\nu-1$ and the functional in (\ref{eq:A-m'}) is independent on the first coordinate. This corresponds to a 1-parameter family of equivalent minima, and the minimisers are determined modulo a 1-parameter transformation. We can select a particular minimiser by taking any $c>\nu-1$ and then sending $c\to\nu-1$ (\ie requiring continuity in $c$). 
\end{remark}

\begin{proof}

Let $c\geq \nu-1$. We put $ m':=\boldsymbol{P}m$ (that is $m'=(\boldsymbol{v_1\cdot m},\dots, \boldsymbol{v_{\nu}\cdot m})$) and introduce the interpolating function 
\be
\Phi(s,t)=\frac 1 N \log \sum_{\boldsymbol{\sigma}}\exp \left\{N\left( t \sum_{a<b}\a_a\a_bm_am_b +\frac{(\beta-t)c}{2}\sum_a \a_a^2m_a^2 + \sum_a s_a m'_a, \right)\right\}\,.
\ee
$\Phi(s,t=0)$ reduces to the convex sum of the pressures of Curie Weiss models, each at inverse temperature $\b c\alpha_a$ and external field $s'_a:=(\boldsymbol{A}^{-1}\boldsymbol{P}^Ts)_a$.
For $t=\b$ and $s(h): \boldsymbol{P}^Ts(h)=\b {\bf A}h$, we recover our original model (${\bf A}$ was defined in the proof of the previous lemma). Because of Lemma \ref{lemma:radice}, differentiating in $t$ the interpolating function we get
\be\label{idea}
\partial_t\Phi(s,t)= \frac 1 2 \meanv{ 2\sum_{a<b}\a_a\a_bm_am_b -c\sum_a \a_a^2m_a^2}=-\frac{\meanv{(m,\boldsymbol{T}m)}}{2}=-\frac{1}2\sum_a\meanv{{m'_a}^2}\,.
\ee

In addition we have
\bea
\partial_{s_a}\Phi(s,t)&=&\meanv{m'_a},\nn\\
\Delta\Phi(s,t)&=&N\sum_a\left(\meanv{{m'_a}^2}-\meanv{m'_a}^2\right)\nn.
\eea
 
Since
$$
\Phi(s,0)=\sum_a \a_aA_{CW}\left(\beta c\a_a, s'_a\right),
$$
$\Phi$ satisfies a viscous Hamilton-Jacobi equation in $\R^\nu$
\be\label{eq:HJ}
\left\{
\begin{array}{rcl}
\partial_t\Phi(s,t)+\frac12|\nabla \Phi(s,t)|^2+\frac{1}{2N}\Delta \Phi(s,t)&=&0\\
\Phi(s,0)&=&\sum_a \a_aA_{CW}\left(\beta c\a_a,s'_a\right).
\end{array}
\right.
\ee
As $N\to\infty$, the solution of this PDE can be shown to approach the unique viscosity solution of the free Hamilton-Jacobi equation, given by the Hopf-Lax formula (see for instance \cite{can})
\be\label{eq:hopf-lax}
\Phi(s,t)=\min_{z\in\R^\nu}\left(\frac{(s-z)^2}{2t}+\Phi(z,0)\right)\,.
\ee
Let us introduce $z=s-m't$, so that we can rewrite the variational principle as
\be\label{eq:hopf-lax}
\Phi(s,t)=\min_{m'\in\R^\nu}\left(\frac{t(m')^2}{2}+\Phi(s-m't,0)\right)\,.
\ee
Thus
$$
A(\b,\a)=\Phi(s(h),\b)= \min_{m'\in\R^\nu}\left(\frac{\b(m')^2}{2}+\sum_a \a_aA_{CW}\left(\b c\a_a,\b h_a-\beta(\boldsymbol{A}^{-1}\boldsymbol{P^T}m')_a\right)\right)\,.
$$
\end{proof}
\begin{remark}
For $c>\nu-1$ we can express the pressure in terms of the physical magnetisation and of the energy tensor $\boldsymbol{J}^c=\boldsymbol{P^T P}$ as
\be\label{eq:A}
A(\b,\a)=\min_{m\in\R^\nu}\left(\frac{\b(m,\boldsymbol{J}^cm)}{2}+\sum_a \a_aA_{CW}\left(\b c\a_a,\b h_a-\beta (\boldsymbol{A}^{-1}\boldsymbol{J}^cm)_a\right)\right)\,.
\ee
\end{remark}

\begin{remark}
This variational principle is not separating the entropic and energetic contribution. This mirrors the fact that $(m,\boldsymbol{J}^cm)$ is not precisely the quadratic form associated to the energy (whence the $\min$ instead of a more familiar $\max$). There are two ways to pick out these two components, introducing the ancillary variable $M:=(M_1,\dots,M_\nu)\in\R^\nu$. One is to use the convexity principle of the CW pressure 
\be
A_{CW}(\b,h)=\max_{M}\left[-\b M^2+\phi(\b M+h)\right]
\ee
for each $a=1,\dots,\nu$ and get
\bea
A(\b,\a)&=&\min_{m\in\R^\nu}\max_{M\in\R^\nu}\left(\frac{\b c}{2}\sum_a\a_a^2(m^2_a-M^2_a)- \b\sum_{(a,b)} \a_a\a_bm_am_b\right.\nn\\
& +&\left. \sum_a \a_a \phi_a\left(\b c\a_a (M_a-m_a)+\b h_a+\beta\sum_{b\neq a} \a_b m_b\right)\right)\nn\,.
\eea

Optimisation leads to $m_a=M_a$ given by the self consistency relations (\ref{eq:selfcons-m}) below. Thus
\be\label{eq:A-energy-entropy}
A(\b,\a)= -\b\sum_{(a,b)} \a_a\a_bm_am_b + \sum_a \a_a \phi_a\left(\beta\sum_{b\neq a} \a_b m_b+\b h_a\right)\,.
\ee
This is not settled in a variational form, but for $\nu=2$ one can verify that (\ref{eq:A-energy-entropy}) corresponds to the $\min\max$ principle of \cite{bgg}\cite{gb2}.


Alternatively,  we can recover the entropic maximum principle from the one of the CW model (see the general framework presented in \cite{paris}). We introduce the entropy of each party as
\be
\SSS_a(M):=\lim_N\frac1N \log P(m_a\geq M)\,
\ee
and use the following formula \cite{paris}
\be\label{entropicoCW}
A_{CW}(\b,h)=\max_M\left(\frac{\b M^2}{2}+hM+\SSS(M)\right)\,. 
\ee
We plug (\ref{entropicoCW}) in (\ref{eq:A}) for each $a=1,\dots,\nu$ and obtain (after some manipulations)
\bea
A(\b,\a)&=&\min_{m\in\R^\nu}\max_{M\in\R^\nu}\left(\frac{\b c}{2}\sum_{a=1}^\nu \a_a^2(m_a^2-M_a^2)-\b\sum_{(a,b)}\a_a\a_bm_a(m_b-M_b)\right.\nn\\
&+&\left.\b\sum_{(a,b)}\a_a\a_bm_aM_b +\b\sum_a(\a_ah_aM_a+\a_a\SSS_a(M_a))\right)\,.\nn
\eea
As it can be directly checked, we can change the order of the $\max$ and the $\min$ in this expression. Therefore optimisation on $m$ yields again $m_a=M_a$. Thus we are left with
\be
A(\b,\a)=\max_{M\in\R^\nu}\left(\b\left(\sum_{(a,b)}\a_a\a_bM_aM_b +\sum_{a=1}^{\nu}\a_ah_aM_a\right)+\a_a\SSS_a(M_a) \right)\,.
\ee
This is the entropic principle for our model in the usual maximum formulation.
\end{remark}

\subsection{Magnetisations}
The minimisers of (\ref{eq:A-m'}) are $$m'_a(s,t)=-1/t\partial_{m'_a}(\Phi(s-m't,0))= \partial_{s_a}\Phi(s-m't,0)=\meanv{m'_a}_{s,t}\,,$$ \ie the thermal average of the rotated order parameter. Differentiating equation $(\ref{eq:A})$ w.r.t. $m$ we obtain
\be 
\boldsymbol{J}^c(m-\tilde{m}^{CW})=0,\quad \tilde{m}^{CW}_a= \left.\partial_x A^{CW}(\b c\a_a,x)\right|_{x=\b h_a-\beta (\boldsymbol{A}^{-1}\boldsymbol{J}^cm)_a}\,.
\ee
Since $\boldsymbol{J}^c$ is positive definite, we get $\nu$ self--consistent equations $m=\tilde{m}^{CW}$, where $\tilde{m}^{CW}_a$ satisfy
\begin{eqnarray} 
\tilde{m}^{CW}_a &=& \phi'_a\left(\b c\a_a\tilde{m}^{CW}_a+\b h_a- \beta (\boldsymbol{A}^{-1}\boldsymbol{J}^cm)_a\right) \nonumber \\
&=& \phi'_a\left(\b c\a_a\tilde{m}^{CW}_a+\b h_a- \beta c \alpha_a m_a-\beta \sum_{b\neq a} \alpha_b m_b\right)\,. 
\end{eqnarray}
Thus, the term depending on $c$ is canceled and we obtain
\be\label{eq:selfcons-m}
F_a(m,\b):=m_a -\phi'_a\left(\b h_a+\beta \sum_{b\neq a} \a_b m_b \right)=0\,.
\ee
Consider now the matrix
\be
M_{ab}:=\partial_{m_b} \phi'_a\left(\beta \sum_{c\neq a} \a_c m_c \right)\,.
\ee
Of course $\mathbb{I}-\b {\bf M}$ is the Jacobian matrix associated to the system (\ref{eq:selfcons-m}). It is convenient to put
$$
M^0_{ab}(\b):=\left.M_{ab}\right|_{m_1=\dots=m_\nu=0}=\begin{cases}
\b\a_b &a\neq b\,,\\
0&\mbox{otherwise}\,.
\end{cases}
$$
We are now ready to state the following

\begin{proposition}\label{prop:b<bc}
There is a $\b_c=\b_c(\a_1,\dots, \a_\nu)>0$ such that the equation
\be
\det(\mathbb I-{\bf M^0}(\b))=0
\ee
has a unique positive solution, and for $\b<\b_c$, $h_1=\dots=h_\nu=0$ we have $\meanv{m}=(0,\dots,0)$.
\end{proposition}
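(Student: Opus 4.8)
The plan is to treat the two assertions separately: first the existence and uniqueness of the critical temperature, which is a purely linear-algebraic statement about $\mathbf M^0(\b)$, and then the vanishing of the magnetisation below it, which I would reduce to showing that the self-consistency system (\ref{eq:selfcons-m}) admits only the trivial solution for $\b<\b_c$.

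For the first assertion I would exploit that $\mathbf M^0(\b)$ differs from a rank-one matrix only by a diagonal. Writing $\mathbb I-\mathbf M^0(\b)=(\mathbb I+\b\,\diag(\a_1,\dots,\a_\nu))-\b\,\mathbf 1\boldsymbol{\a}^{T}$, with $\mathbf 1=(1,\dots,1)^{T}$ and $\boldsymbol{\a}=(\a_1,\dots,\a_\nu)^{T}$, the matrix determinant lemma gives
\be
\det(\mathbb I-\mathbf M^0(\b))=\Big(\prod_{a=1}^{\nu}(1+\b\a_a)\Big)\Big(1-\b\sum_{a=1}^{\nu}\frac{\a_a}{1+\b\a_a}\Big)\,.
\ee
Since the product is strictly positive for $\b\geq0$, the determinant vanishes exactly when $g(\b):=\sum_{a}\frac{\b\a_a}{1+\b\a_a}=1$. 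One checks $g(0)=0$, $g'(\b)=\sum_a\a_a(1+\b\a_a)^{-2}>0$, and $g(\b)\to\nu\geq2$ as $\b\to\infty$; hence there is a unique positive root $\b_c=\b_c(\a_1,\dots,\a_\nu)$, settling the first part.

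For the second assertion, note that $m=0$ always solves (\ref{eq:selfcons-m}) since each $\phi'_a$ is odd. To exclude non-trivial solutions I would use that $\mathbf M^0(\b)$ is a nonnegative and, for $\nu\geq2$, irreducible matrix, and estimate any solution componentwise. Granting the single-site bound $|\phi'_a(t)|\leq|t|$, equation (\ref{eq:selfcons-m}) yields $|m_a|\leq\b\sum_{b\neq a}\a_b|m_b|=(\mathbf M^0(\b)|m|)_a$, i.e. $|m|\leq\mathbf M^0(\b)|m|$ entrywise with $|m|:=(|m_1|,\dots,|m_\nu|)$. The Perron root $\l^*(\b)$ of $\mathbf M^0(\b)$ solves $\b\sum_a\a_a(\l+\b\a_a)^{-1}=1$; the left-hand side is strictly decreasing in $\l$ and equals $g(\b)$ at $\l=1$, so $\l^*(\b)<1\iff g(\b)<1\iff\b<\b_c$. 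Thus for $\b<\b_c$ the spectral radius of $\mathbf M^0(\b)$ is $<1$, so $(\mathbb I-\mathbf M^0(\b))^{-1}=\sum_{k\geq0}\mathbf M^0(\b)^k$ is entrywise nonnegative and $(\mathbb I-\mathbf M^0(\b))|m|\leq0$ forces $|m|=0$. Hence $m=0$ is the only solution, and by the identification (established just above the statement) of the variational minimiser with the thermal average of the order parameter, I conclude $\meanv{m}=(0,\dots,0)$.

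The main obstacle is the single-site inequality $|\phi'_a(t)|\leq|t|$ (equivalently $\phi''_a\leq1$, or the stronger concavity of $\phi'_a$ on $[0,\infty)$): this is where the hypotheses on the single-spin law, in particular $P^a_4<0$ together with symmetry and sub-Gaussianity, are genuinely used. It cannot be bypassed by a convexity argument on the functional (\ref{eq:A}) via a clever counterterm, since a direct check in the bipartite case with strongly unbalanced weights shows that for some $\b<\b_c$ one cannot keep all building-block temperatures $\b c\a_a$ below the Curie--Weiss threshold $1$ while maintaining $c\geq\nu-1$, so that functional is generally not smooth at the origin and a Hessian argument fails. I would therefore derive the bound from the completely-monotone/Bernstein correspondence of the Appendix, which encodes $P^a_4<0$ as the relevant convexity property. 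A secondary point to handle with care is the passage from ``unique minimiser $=0$'' to $\meanv{m}=0$: the statement is really about the spontaneous magnetisation, i.e. the location of the minimum of the limiting variational functional, since at $h_1=\dots=h_\nu=0$ the finite-volume average vanishes by spin-flip symmetry for every $\b$.
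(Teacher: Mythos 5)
Your proposal is correct, and on both halves it takes a genuinely different route from the paper. For the existence and uniqueness of $\b_c$ you exploit the rank-one structure $\mathbb I-{\bf M^0}(\b)=\left(\mathbb I+\b\,\diag(\a_1,\dots,\a_\nu)\right)-\b\,\mathbf{1}\boldsymbol{\a}^{T}$ and the matrix determinant lemma, reducing everything to the strictly increasing function $g(\b)=\sum_a\b\a_a/(1+\b\a_a)$ with $g(0)=0$ and $g(\infty)=\nu\geq2$; the paper instead expands the determinant (and every principal minor) combinatorially via the fixed-point-free permutation count $\mathcal D(k)$, obtaining $1-\sum_{k\geq2}\b^{k}(k-1)\sum_{a}\a_a$ and arguing monotonicity in $\b$ term by term. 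Your closed form is cleaner (and reproduces the paper's bipartite and tripartite determinants), while the paper's expansion has the side benefit of exhibiting all principal minors, which it needs because its route to the second assertion is the Gale--Nikaido global univalence theorem: the Jacobian of (\ref{eq:selfcons-m}) is a P-matrix whenever $\mathbb I-{\bf M^0}(\b)$ is, thanks to $0\leq M_{ab}\leq M^0_{ab}$. You replace Gale--Nikaido by a Perron--Frobenius/M-matrix argument: the entrywise bound $|m|\leq{\bf M^0}(\b)|m|$ together with spectral radius $\l^*(\b)<1$ (your characterisation of $\l^*$ as the root of $\b\sum_a\a_a(\l+\b\a_a)^{-1}=1$ is correct and does identify $\l^*<1$ with $\b<\b_c$) forces $|m|=0$ via nonnegativity of the Neumann series; this is more elementary and self-contained than invoking Gale--Nikaido. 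Both arguments rest on the same single-site inequality: the paper's unproved assertion $0\leq M_{ab}\leq M^0_{ab}$ is exactly your $\phi_a''\leq1$, which you rightly isolate as the real content of the hypotheses --- though note that $P^a_4<0$ only gives $\phi_a''\leq1$ near the origin, and a global bound requires a GHS-type property of the single-spin law that neither you nor the paper actually establishes, so this is a shared reliance rather than a new gap. Your closing caveat distinguishing the finite-volume average (zero by spin-flip symmetry at all $\b$) from the location of the minimiser of the limiting functional is likewise a point the paper glosses over.
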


Moreover we can show that the critical temperature introduced in Proposition \ref{prop:b<bc} represents not only a bound for the paramagnetic phase, but it is actually a bifurcation point: a second order phase transition with the usual ferromagnetic critical exponent $\frac12$ occurs and the system magnetises. This is precisely stated in the subsequent

\begin{proposition}\label{prop:b>bc}
Let $\b>\b_c$ and $w\in \Ker(\mathbb I-{\bf M^0}(\b_c))$, $\|w\|=1$. Then there is a $m^*(\b, \a_1,\dots,\a_\nu)\in\R^\nu$ and a number $\kappa$, depending only on $\a_1,\dots,\a_\nu, P^1_4,\dots,P^\nu_4$ and $w$, such that
\be
m^*=w\sqrt{\frac{\b-\b_c}{\b_c^3\kappa}}+\OOO{\b-\b_c}
\ee
and
\be\label{eq:spontSymmBreak}
\lim_{\stackrel{(h_1,\dots,h_1)\to(0,\dots,0)}{(h,w)\lessgtr0}}\meanv{m}=\mp m^*\,,\quad \lim_{\stackrel{(h_1,\dots,h_1)\to(0,\dots,0)}{(h,w)=0}}\meanv{m}=0\,.
\ee
\end{proposition}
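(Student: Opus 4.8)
The plan is to read the spontaneous magnetisation off the self-consistency system (\ref{eq:selfcons-m}), treating $(m,\beta)=(0,\beta_c)$ as a bifurcation point and carrying out a Lyapunov--Schmidt reduction. Write $F(m,\beta,h)=0$ for the map with components $F_a=m_a-\phi'_a(\beta h_a+\beta\sum_{b\neq a}\alpha_b m_b)$. The branch $m\equiv0$ solves $F(0,\beta,0)=0$ for every $\beta$ because each $\phi_a$ is even, and its linearisation at the origin is exactly $\mathbb I-{\bf M^0}(\beta)$. By Proposition \ref{prop:b<bc} this operator is invertible for $\beta<\beta_c$ and singular at $\beta_c$; since ${\bf M^0}(\beta)$ is a nonnegative irreducible matrix, Perron--Frobenius makes its critical right null space one-dimensional, spanned by a componentwise positive $w$, with a likewise positive left null vector $w^\ast$. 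A short computation identifies $w^\ast$ with $\diag(\alpha_a)\,w$ up to normalisation. This is the algebraic input for the reduction.

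I would then split $m=\xi w+\psi$ with $\psi$ in a fixed complement of $w$, solve the range component of $F=0$ for $\psi=\psi(\xi,\beta)$ by the implicit function theorem, and substitute into the cokernel component $\langle w^\ast,F\rangle=0$ to obtain a scalar equation. Because $\phi'_a$ is odd, $F$ has no quadratic terms, so $\psi=\OOO{\xi^3}$ and the reduced equation is a genuine pitchfork. Expanding $\phi'_a(t)=t+\frac{P^a_4}{6}t^3+\cdots$ from (\ref{eq:espansione-phi}), using at $\beta_c$ the kernel relation $\beta_c\sum_{b\neq a}\alpha_b w_b=w_a$ (so the argument felt by party $a$ is simply $\xi w_a$) together with $\partial_\beta(\mathbb I-{\bf M^0})(\beta_c)\,w=-w/\beta_c$, the reduced equation becomes, with $\delta:=\beta-\beta_c$,
\[
-\frac{\delta\,\xi}{\beta_c}\sum_a\alpha_a w_a^2-\frac{\xi^3}{6}\sum_a\alpha_a P^a_4 w_a^4+\OOO{\delta\xi^3,\xi^5}=0 .
\]
Dividing by $\xi$ gives $\xi^2=\delta/(\beta_c^3\kappa)+\OOO{\delta^2}$ with
\[
\kappa=\frac{-\sum_a\alpha_a P^a_4 w_a^4}{6\,\beta_c^2\sum_a\alpha_a w_a^2}.
\]
Since $P^a_4<0$ for every $a$ (noted after (\ref{eq:espansione-phi})) and $w$ is positive, $\kappa>0$: the branch exists precisely for $\delta>0$, the bifurcation is supercritical, and $m^\ast=\xi w=w\sqrt{\delta/(\beta_c^3\kappa)}+\OOO{\delta}$ as claimed, with $\kappa$ depending only on the $\alpha_a$, the $P^a_4$ and $w$.

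For the selection statement (\ref{eq:spontSymmBreak}) I would reinstate the field. At leading order $h$ enters $F$ linearly, and projecting onto $w^\ast$ adds the imperfection term $-\beta\sum_a\alpha_a h_a w_a$ to the reduced equation, turning the pitchfork into an imperfect one. The physical magnetisation is the global minimiser of the variational principle of Proposition \ref{Prop:pressure} (equivalently $\langle m_a\rangle=\frac1{\beta\alpha_a}\partial_{h_a}A$), so the selected root is the one minimising the reduced free energy. Comparing the two outer branches of the imperfect pitchfork, this value is ordered by the sign of the overlap $\sum_a\alpha_a h_a w_a$ of the field with the critical mode -- the quantity written $(h,w)$ in the statement. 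Hence as $h\to0$ along a direction with this overlap of fixed sign the minimiser converges to the corresponding $\mp m^\ast$, while if the overlap vanishes the critical mode is not excited and one is left on the trivial branch, so $\langle m\rangle\to0$.

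The routine part is the algebraic reduction; the two delicate points are the sign of $\kappa$ and the selection limit. Positivity of $\kappa$ -- hence the very existence of the branch for $\beta>\beta_c$ rather than $\beta<\beta_c$ -- rests squarely on the sub-Gaussian assumption $P^a_4<0$ of (\ref{cond:CW-subGauss}); without it the transition could be first order and the present bifurcation picture would collapse. The genuine obstacle, however, is making the double limit $\lim_{h\to0}\lim_{N\to\infty}\langle m\rangle$ rigorous in this non-convex setting with a reversed (minimum) variational principle: one must show the minimiser is unique and stays $\OOO{h}$-close to an outer branch for small $h\neq0$, that the free-energy comparison selects the sign uniformly as $h\to0$, and that the degenerate hyperplane $\sum_a\alpha_a h_a w_a=0$ really returns the paramagnetic state rather than a higher-order-selected magnetised one.
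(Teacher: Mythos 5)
Your proposal is correct and follows essentially the same route as the paper: the paper likewise treats $(\b_c,0)$ as a bifurcation point of the self-consistency map $F$ from (\ref{eq:selfcons-m}), notes the one-dimensional kernel of $\mathbb I-{\bf M^0}(\b_c)$, the vanishing of the second Fr\'echet derivative by symmetry of $\phi_a$, and the sign condition $P_4^a<0$ making the pitchfork supercritical, then delegates the rest to standard bifurcation theory (Ambrosetti--Prodi). You simply carry out explicitly the Lyapunov--Schmidt reduction that the paper cites, obtaining the explicit $\kappa$ and the left null vector $\diag(\a_a)w$, and your closing caveats about the double limit in (\ref{eq:spontSymmBreak}) concern a point the paper's proof also leaves implicit.
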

\begin{proof}[Proof of Proposition \ref{prop:b<bc}]
Since $\phi'(0)=0$ (\ref{eq:selfcons-m}) has always the trivial solution. According to the Gale-Nikaido theorem \cite{GN}, the set of equations (\ref{eq:selfcons-m}) has a global unique solution in the origin if all the principal minors of the Jacobian matrix are positive. Since $0\leq M_{ab}\leq M^0_{ab}$ and both matrices have all zero diagonal entries, it suffices to show that $\mathbb{I}-{\bf M^0}(\b)$ is positive definite.  

Let $2\leq\nu'\leq\nu$ and $I^{\nu'}\subseteq\{1,\dots,\nu\}$ with $|I^{\nu'}|=\nu'$. We define  $\mathcal{A}^k(I^{\nu'})$ the set of ordered $k$-ples in $I^{\nu'}$, \ie $a\in\mathcal{A}^k(I^{\nu'})$ is a multi-index $a=(a_1,\dots, a_k)$ with $a_i \in I^{\nu'}\subseteq\{1,\dots,\nu\}$, and we denote by $\a_a:=\a_{a_1}\dots\a_{a_k}$. We will use the formula $\det (A)=\sum_{\pi}\operatorname{sgn}(\pi)  a_{1\pi(1)}\ldots a_{n,\pi(n)} $. 

Each principal minor of $\mathbb{I}-\b M^0$ with rank $\nu'$ identified by $I^{\nu'}$, reads
$$
1+\sum_{k=1}^{\nu'}\mathcal{D}(k)(-1)^k \b^{k}\sum_{a\in \mathcal{A}^k(I^{\nu'})} \a_a\,,
$$  
where we group terms in the determinant according to the power of $\beta$, \ie to the number of fixed points in the permutation. 
$\mathcal{D}(k)$ is the sum (weighted with the relative sign) of all  the permutation of the set $(1,\ldots , k)$ with no fixed points, that can be computed as
$$
(-1)^k \mathcal{D}(k) = \det 
\begin{bmatrix}
0 & -1 & \cdots & -1 \\
-1 & 0 & \cdots & -1 \\
\vdots & \vdots & \ddots & \vdots \\
-1 & -1 & \cdots & 0
\end{bmatrix}
=\det (\boldsymbol{T}^{c=0})= 1-k,
$$  
since $\boldsymbol{T}^{c=0}$, already defined in the proof of Lemma 1, has eigenvalues $1$ with multiplicity $k-1$ and $1-k$ with multiplicity one.  
Thus $\b$ must satisfy the set of inequalities
\be\label{eq:beta-critico-nu'.k}
1-\sum_{k=2}^{\nu'}\b^{k}(k-1)\sum_{a\in \mathcal{A}^k(I^{\nu'})} \a_a\,>0\,,\quad \forall I^{\nu'}\subseteq\{1,\ldots,\nu\} .
\ee
Since for each $I^{\nu'}$ the r.h.s. of (\ref{eq:beta-critico-nu'.k}) defines a continuous and decreasing function of $\b\geq0$, there is a unique $\b_c(I^{\nu'})>0$ such that it is positive for $\b<\b_c(I^{\nu'})$ and negative otherwise. Therefore for
\be
\b<\min_{I^{\nu'}\subseteq\{1,\dots,\nu\}} \b_c(I^{\nu'})\,
\ee
the system of equations (\ref{eq:selfcons-m}) has a unique solution in $(0,\dots,0)$.
Finally, since all the terms in ($\ref{eq:beta-critico-nu'.k}$) are negative, the condition on the determinant suffices to determine the critical value of $\b$. 
\end{proof}

\begin{remark}
The Gale-Nikaido theorem used in the last proof can be interpreted in our context as follows: the $\nu$-partite model has no phase transition if for any $2\leq\nu'\leq\nu$, none of the $\nu'$-partite subsystems undergoes a phase transition (at zero fields). 
\end{remark}

\begin{proof}[Proof of Proposition \ref{prop:b>bc}]
From the analysis in the proof of Proposition \ref{prop:b<bc} we can infer that there exists a $w\in\R^\nu$ such that $\Ker(\mathbb{I}-{\bf M^0}(\b_c))=\{tw\in\R^\nu, t\in\R\}$. Moreover, looking at the second Fr\'echet derivative of the map $F(\b,m)$ defined in (\ref{eq:selfcons-m}) we get for any $a,b,c \in\{1,\dots,\nu\}$ 
\be
\partial_{m_bm_c}F_a\big|_{(\b,m)=(\b_c,0)}=-\b_c^2\a_b\a_c(1-\d_{ab})(1-\d_{ac})\phi^{(3)}(0)=0\,.
\ee
Therefore, since $P_4^{a}<0$, $a=1,\dots,\nu$, $(\b_c,0)$ is a supercritical (or pitchfork) bifurcation point for $F(\b,m)$ and the statement follows from standard results in bifurcation theory (for which we refer to \cite{AP}). 
\end{proof}

\subsection{The bipartite and tripartite models}
We give concrete examples for $\nu=2,3$. In the bipartite model condition (\ref{eq:beta-critico-nu'.k}) reduces to
\be\label{eq:bid}
1-\b^2\a_1\a_2>0,
\ee
and so $\b_c=1/\sqrt{\a_1\a_2}$. We can characterise quite precisely the spontaneous symmetry breaking for zero fields. It follows from a direct computation that the kernel of $\mathbb{I}-{\bf M^0}(\b_c)$ is spanned by the unitary vector
\be\label{eq:w-bip}
w:=\left(\begin{array}{c}\sqrt{\a_2}\\ \sqrt{\a_1}\end{array}\right)\,,
\ee
thus we have spontaneous symmetry breaking along the direction $w$ as
\be\label{eq:mbip}
m^*\sim\left(\begin{array}{c}\sqrt{\a_2}\sqrt{\b\sqrt{\a_1\a_2}-1}\\ \sqrt{\a_1}\sqrt{\b\sqrt{\a_1\a_2}-1}\end{array}\right)+\OOO{\b-\b_c}\,.
\ee
Heuristically, the proofs of Propositions 2,3 in this case are achieved by expanding ($\ref{eq:selfcons-m}$) around $m=0$:
\begin{eqnarray}\label{eq:biplin}
m_1-\phi'_1(\b \a_2 m_2)=m_1-\b \a_2 m_2 - \frac{P_1^4(0)}{3!}\b^3\a_2^3m_2^3 =\OOO{m_2^5}\,,\nn\\
m_2-\phi'_2(\b \a_1 m_1)=m_2-\b \a_1 m_1 - \frac{P_2^4(0)}{3!}\b^3\a_1^3m_1^3 =\OOO{m_1^5}\,.\nn
\end{eqnarray}
Solving w.r.t. $m_1$ the first equation and plugging the result in the second one, we get for $\b\sim\b_c$
$$
(1-\b^2\a_1\a_2)m_2= \frac{\b_c^3\a_2}{3!}(\a_1P^4_1(0)+\a_2P^4_2(0))m_2^3+\OOO{m^5}\,.
$$
This equation, solved for $m_2\neq 0$, gives (repeating the same argument for $m_1$) the scaling $(\ref{eq:mbip})$, with the proportionality constant in the form of Proposition \ref{prop:b>bc}. The role of $w$ is even clearer if we diagonalise ($\ref{eq:biplin}$), $\mathbb{I}-{\bf M^0}(\b_c)m = \OOO{m^3,(\b-\b_c)}$, with the transformation 
$m=\boldsymbol{O}m'$, with $\boldsymbol{O}=(w,w^\bot)$ such that
\be 
\boldsymbol{O}=
\boldsymbol{O^T}(\mathbb{I}-M^0(\b_c))\boldsymbol{O}=
\left(\begin{matrix}0 & 0 \\ 0 & \lambda   \end{matrix} \right).
\ee
In the diagonalised system $m_2'=\OOO{m_1^2}\sim 0$, that is $m\sim w$ and along the direction $w^\bot$ the magnetisation is weaker, according to the statement of Proposition \ref{prop:b>bc}.  

As for the tripartite model, the (three) principal minors of order two are
$
1-\b^2\a_1\a_2$,
$1-\b^2\a_1\a_3$,
$1-\b^2\a_2\a_3$,
while the determinant 
$
1-\b^2(\a_1\a_2+\a_1\a_3+\a_2\a_3)-2\b^3\a_1\a_2\a_3=0.
$
For $\beta$ approaching zero, they are all positive. If one of the minors  becomes negative, the same happens for the determinant. Thus ($\ref{eq:det}$) alone is a sufficient (and necessary) condition for the definition of the critical temperature (see Fig. \ref{fig:tctri}) as
\be\label{eq:det}
1-\b_c^2(\a_1\a_2+\a_1\a_3+\a_2\a_3)-2\b_c^3\a_1\a_2\a_3=0.
\ee 
Note that for $\a_{i}\to 0$, $i\in\{1,2,3\}$, condition ($\ref{eq:det}$) reduces to ($\ref{eq:bid}$). 
\begin{figure}
\includegraphics[scale=0.55]{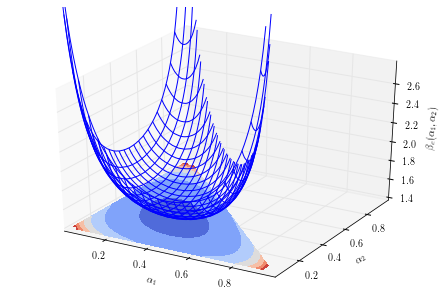}
\caption{Critical surface of the tripartite model as a function of $\a_1$ and $\a_2$.}\label{fig:tctri}
\end{figure}
Again we con compute directly the scaling of the spontaneous magnetisation by linearising ($\ref{eq:selfcons-m}$) around the paramagnetic solution as
\begin{eqnarray}\label{eq:triplin}
m_1-\b \a_2 m_2-\b \a_3 m_3 = \OOO{m^3}\,,\nn\\
m_2-\b \a_1 m_1-\b \a_3 m_3 = \OOO{m^3}\,,\nn\\
m_3-\b \a_1 m_1-\b \a_2 m_2 = \OOO{m^3}\,.\nn
\end{eqnarray}
Plugging the first equation in the second one, as soon as the minor of order two is different from zero, we have
\begin{eqnarray}
m_1 &=& \frac{\b\a_3+\b^2\a_2\a_3}{1-\b^2\a_1\a_2}m_3+\OOO{m^3}=\frac{(\mathbb{I}-M^0)_{\backslash 31}}{(\mathbb{I}-M^0)_{\backslash 33}}m_3+\OOO{m^3}\nn\\
m_2 &=& \frac{\b\a_3+\b^2\a_1\a_3}{1-\b^2\a_1\a_2}m_3+\OOO{m^3}=-\frac{(\mathbb{I}-M^0)_{\backslash 32}}{(\mathbb{I}-M^0)_{\backslash 33}}m_3+\OOO{m^3}\nn,
\end{eqnarray}
where $(\mathbb{I}-M^0)_{\backslash ij}$ denotes the minor obtained removing the $i$-th row and the $j$-th column. Hence, putting the first two equations into the third one, we obtain
$$
-\b\a_1(\mathbb{I}-M^0)_{\backslash 31}m_3 +\b\a_2(\mathbb{I}-M^0)_{\backslash 32}m_3-(\mathbb{I}-M^0)_{\backslash 33}m_3=\det(\mathbb{I}-{\bf M^0})m_3=\OOO{m^3}\,.
$$
Since $\det(\mathbb{I}-{\bf M^0}(\b))=(\beta-\b_c)P(\beta)$ we find the expected scaling $m_3\sim\sqrt{\b-\b_c}$ (and so for $m_1$ and $m_2$). Furthermore, from the Cramer decomposition, it is easy to see that the vector $m=((\mathbb{I}-M^0)_{\backslash 31},-(\mathbb{I}-M^0)_{\backslash 32},(\mathbb{I}-M^0)_{\backslash 32})$ spans the kernel of $\mathbb{I}-{\bf M^0}(\b)$.   

\section{Phase Diagram of the Bipartite Ferromagnet ($\nu=2$)}\label{SezioneBip}
 
\ni There is a precise equivalence between the bipartite models and a generalised ferromagnets studied in Appendix $\ref{sec:genferr}$, which we will often refer to. To see this duality, let us first introduce
\bea
u_1(x)&:=& \phi_2(x)=\log\mathbb{E}_\t e^{x\t}\label{eq:u1}\,,\\
u_2(x)&:=& \phi_1(x)=\log\mathbb{E}_\s e^{x\s}\label{eq:u2}\,.
\eea
\ni It can be verified at once that $u_1(\cdot)$ and $u_2(\cdot)$ are symmetric, non negative $C^2$ functions, uniformly convex, vanishing in the origin. Furthermore it is $u''_1(0)=\mathbb{E}[\t^2]=1$ and $u''_2(0)=\mathbb{E}[\s^2]=1$. Moreover, in virtue of our assumptions on the spin distribution ($\ref{cond:CW-subGauss}$), $u(x)- \frac{u''(0)x^2}{2}$ is negative and convex.

Now we write the partition function of the bipartite model in terms of $u_1$ or $u_2$ marginalising over one party. Since $\a_2=1-\a_1$, here we rename $\a_1=:\a$; moreover, with a little abuse of notations, we indicate as $\E_\s$ both expectations w.r.t. the spin configurations and the single spin variable. We have
\bea\label{eq:Z1}
Z_N(\beta, h_1, h_2)&=&\mathbb{E}_{\sigma^{1}}e^{\b N_1 h_1m_1}\left[\mathbb{E}_{\s^2}e^{(\beta \a m_1+\b h_2)\s^2}\right]^{N_2}\nn\\
&=&\mathbb{E}_{\sigma^{1}}e^{N(1-\a)\log\mathbb{E}_{\s^2} e^{(\a\beta m_1+\b h_2)\s^2}+\b N_1h_1m_1}\nn\\
&=&\mathbb{E}_{\sigma^{1}}e^{N_1\frac{1-\a}{\a} u_1(\a\beta m_1+\b h_2)+\b N_1 h_1m_1}\,.
\eea
Analogously one can write
\be\label{eq:Z2}
Z_N(\beta, h_1, h_2)=\mathbb{E}_{\s^{2}}e^{N_2\frac{\a}{1-\a} u_2((1-\a)\b m_2+\b h_1)+\b N_2h_2m_2}\,.
\ee

\ni Therefore we can regard the bipartite model as a generalised mono-partite ferromagnet with energy function given by $u_1$ (respectively $u_2$) and spin $\eta^1=\a\b\s^1+h_2$ (respectively $\eta^2=\b(1-\a)\s^2+h_1$) at inverse temperature $(1-\a)/\a$ (respectively $\a/(1-\a)$). Let us consider the representation (\ref{eq:Z1}).  Using Proposition $\ref{Prop:CW-FreeEnergy}$ the pressure of the model reads
\begin{eqnarray}\label{eq:MargFree}
A(\b,h_1,h_2)&=&A_1(M_1;\b,h_1,h_2) \\
&=&\frac{1-\a}{\a}[u_1(\a\b M_1+\b h_2)-\a\b M_1 u_1'(\a\b M_1 +\b h_2)]+u_2(\b(1-\a) u_1'(\a\b M_1 +\b h_2)+\b h_1)\nn
\end{eqnarray}
with  $M_1$ satisfying
\be\label{sceq}
M_1=u_2'(\beta(1-\alpha)u_1'(\alpha\beta M_1+\beta h_2)+\beta h_1).
\ee
A useful check consist in defining $M_2=u_1'(\alpha\beta M_1+\beta h_2)$, in order to rewrite $(\ref{sceq})$ as in $(\ref{eq:selfcons-m})$:
\begin{eqnarray}\label{bipsyst}
M_1&=&u_2'(\beta(1-\alpha)M_2+\beta h_1)\nonumber \\
M_2&=&u_1'(\alpha\beta M_1+\beta h_2)
\end{eqnarray}
Clearly the auxiliary variable $M_2$ is just the second order parameter, as we can see by marginalising on the first party at first. When $h_1=h_2=0$, checking the derivative of both sides of equation $(\ref{sceq})$ in $M_1=0$ we get the condition for the critical point
\be 
1=\beta^2\alpha (1-\alpha) u''_2(\beta(1-\alpha)u_1'(0))u_1''(0)=\beta^2\alpha (1-\alpha),
\ee 
since $u_{1,2}'(0)=0$. Putting  $\b_{c}:=(\a(1-\a) )^{-1/2}$, we find $M_1=0$ for $\b< \b_{c}$, while two other solutions $\pm M_1^*\neq0$ appear for $\b>\b_{c}$. In the same way, marginalising on the first party, we have $M_2=0$ for $\b<\b_{c}$  and spontaneous magnetisation  $\pm M_2^*\neq0$ for $\b>\b_c$, \ie the two partial magnetisations are synchronised.
The mapping within the generalised ferromagnet  and the use  of Proposition $\ref{pr:flu}$ allows also to characterise  the fluctuations of the magnetisations around the mean values. Below the critical temperature we have the following
\begin{proposition}\label{prop:Flutt-Bip}
Let $\beta\in [0,\b_c)$, $h=0$ and a random vector $\mathcal{X}_\b\sim \mathcal{N}(0,\boldsymbol{\chi}(\beta))$ with
\be 
\boldsymbol{\chi}(\beta)=
\left (
\begin{matrix}
\frac{\b^2_c}{\b^2_c-\beta^2} & \frac{\b\b_c}{\b_c^2-\beta^2}  \\
\frac{\b_c\b}{\b^2_c-\beta^2} & \frac{\b^2_c}{\b_c^2-\beta^2}  \\
\end{matrix}
\right)\,.
\ee
Then
\begin{equation}\label{eq:bivariato}
(\sqrt{N_1}m_1,\sqrt{N_2}m_2) \overset{d}{\longrightarrow} \mathcal{X}_\b\,.
\end{equation}
\end{proposition}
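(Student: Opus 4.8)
The plan is to reduce the bivariate statement to the univariate fluctuation result for generalised ferromagnets, Proposition \ref{pr:flu}, combined with the conditional product structure of the bipartite Gibbs measure. Throughout set $\g:=\b/\b_c=\b\sqrt{\a(1-\a)}$, so that $0\leq\g<1$ in the paramagnetic regime, and note that the announced matrix has entries $\chi_{11}=\chi_{22}=(1-\g^2)^{-1}$ and $\chi_{12}=\g(1-\g^2)^{-1}$ (these are exactly $\b_c^2/(\b_c^2-\b^2)$ and $\b\b_c/(\b_c^2-\b^2)$). I would work with the joint characteristic function $\varphi_N(t_1,t_2):=\meanv{\exp(i t_1\sqrt{N_1}m_1+i t_2\sqrt{N_2}m_2)}_N$ and show that it converges to $\exp(-\tfrac12(t,\boldsymbol{\chi}(\b)t))$ for every $(t_1,t_2)\in\R^2$; by L\'evy's continuity theorem this gives (\ref{eq:bivariato}). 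First I would fix the marginal: reading the model through the representation (\ref{eq:Z1}) as a generalised monopartite ferromagnet with energy $u_1$ at subcritical temperature, Proposition \ref{pr:flu} yields $\sqrt{N_1}m_1\overset{d}{\longrightarrow}\mathcal{N}(0,\chi_{11})$.

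The real work is the conditional step. The key structural fact, already used to derive (\ref{eq:Z1}), is that conditionally on the first party the spins of the second party are i.i.d.\ with law tilted by the field $\b\a m_1$; hence for fixed $\sigma^1$ the quantity $N_2 m_2$ is a genuine sum of independent tilted variables and
\be
\meanv{e^{i t_2\sqrt{N_2}m_2}\mid\sigma^1}_N=\exp\Big(N_2\big[\phi_2(\b\a m_1+i t_2/\sqrt{N_2})-\phi_2(\b\a m_1)\big]\Big)\,,
\ee
where $\phi_2=u_1$ is the analytic, uniformly convex cumulant generating function of the second party with $\phi_2'(0)=0$, $\phi_2''(0)=1$. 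Expanding in powers of $t_2/\sqrt{N_2}$ and using $\phi_2'(\b\a m_1)=\b\a m_1+\OOO{m_1^3}$ together with $\sqrt{N_2}\,\b\a m_1=\b\a\sqrt{N_2/N_1}\,\sqrt{N_1}m_1\to\g\sqrt{N_1}m_1$ and $\phi_2''(\b\a m_1)\to\phi_2''(0)=1$, the conditional characteristic function converges to $\exp(i\g t_2\sqrt{N_1}m_1-\tfrac12 t_2^2)$.

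Assembling the two steps by the tower property, $\varphi_N(t_1,t_2)=\meanv{e^{i t_1\sqrt{N_1}m_1}\,\meanv{e^{i t_2\sqrt{N_2}m_2}\mid\sigma^1}_N}_N$, and passing to the limit the inner factor produces $e^{-t_2^2/2}$ times the characteristic function of $\sqrt{N_1}m_1$ evaluated at $t_1+\g t_2$. By the first step the latter converges to $e^{-\frac12\chi_{11}(t_1+\g t_2)^2}$, whence $\varphi_N(t_1,t_2)\to e^{-t_2^2/2}\,e^{-\frac12\chi_{11}(t_1+\g t_2)^2}$. A direct expansion of this exponent reproduces $-\frac12\big(\chi_{11}t_1^2+2\g\chi_{11}t_1 t_2+(1+\g^2\chi_{11})t_2^2\big)=-\frac12(t,\boldsymbol{\chi}(\b)t)$, since $\g\chi_{11}=\chi_{12}$ and $1+\g^2\chi_{11}=(1-\g^2)^{-1}=\chi_{22}$; the limit is a nondegenerate bivariate Gaussian ($\det\boldsymbol{\chi}=(1-\g^2)^{-1}>0$ for $\b<\b_c$), and jointly Gaussian because $\sqrt{N_2}m_2$ is asymptotically $\g\sqrt{N_1}m_1$ plus an independent standard Gaussian.

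The main obstacle is the rigorous control of the conditional expansion, namely justifying the interchange of the limit with the outer expectation over $\sigma^1$: one must dominate the Taylor remainder of order $\OO{\sqrt{N_2}}$ (and the contribution $\sqrt{N_2}\,\OOO{m_1^3}$) uniformly after integration against the subcritical fluctuations of $m_1$. Here the hypotheses (\ref{cond:CW-subGauss}) enter decisively: sub-Gaussianity guarantees that $\phi_2$ is entire enough to admit the complex increment $i t_2/\sqrt{N_2}$ and to bound the remainder, while the strict subcriticality $\g<1$ gives the exponential concentration of $m_1$ near the origin needed to discard the region where $m_1$ is not of order $N^{-1/2}$. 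Once these estimates are in place, the cross term $\g\sqrt{N_1}m_1$ is controlled by the tightness of $\sqrt{N_1}m_1$ established in the first step, and the limit interchange is legitimate.
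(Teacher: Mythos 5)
Your argument is correct, and it reaches (\ref{eq:bivariato}) by a genuinely different route from the paper's. The paper computes the joint \emph{moment generating function}: after integrating out the second party it absorbs the $M_2$-tilt into the argument of $u_1$ and then reruns the proof of Proposition \ref{pr:flu} for the energy $u(x)=u_1(\a\b x)$ at inverse temperature $\tfrac{1-\a}{\a}$, i.e.\ it linearises $u_1$ through the Bernstein/Laplace dual measure $\mu^{\b}_{u,N}$ and evaluates the limiting Gaussian integral (\ref{eq:fg}) in both variables at once. You instead work with characteristic functions and a conditional decomposition: given $\sigma^1$ the second party is i.i.d.\ with a tilted law, so its conditional characteristic function is exactly $\exp(N_2[\phi_2(\b\a m_1+it_2/\sqrt{N_2})-\phi_2(\b\a m_1)])$, which you expand into a drift $\g\sqrt{N_1}m_1$ plus unit Gaussian noise, and you then invoke Proposition \ref{pr:flu} only as a black box for the marginal of $\sqrt{N_1}m_1$ before assembling with the tower property. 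What your route buys: it is more modular (no need to re-derive a two-parameter version of the appendix computation), the bound $\bigl|\exp(N_2[\phi_2(a+ib)-\phi_2(a)])\bigr|\leq 1$ makes the interchange of limit and outer expectation a bounded-convergence argument once $m_1$ is concentrated at scale $N^{-1/2}$ (which you correctly flag as the only real technical point, and which disposes of the $\sqrt{N_2}\,\OOO{m_1^3}$ term), and it cleanly exhibits the probabilistic structure $\sqrt{N_2}m_2\approx\g\sqrt{N_1}m_1+\mathcal{N}(0,1)$ behind the covariance $\boldsymbol{\chi}(\b)$. What the paper's route buys: the dual-measure computation is uniform with the rest of Appendix \ref{sec:genferr} and carries over directly to the critical case of Proposition \ref{Prop:bip-fluct-crit}, where your quadratic conditional expansion would no longer suffice and the quartic cumulants must be tracked. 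Your algebra ($\g\chi_{11}=\chi_{12}$, $1+\g^2\chi_{11}=\chi_{22}$, $\det\boldsymbol{\chi}=(1-\g^2)^{-1}$) checks out.
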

\begin{proof}
From $(\ref{eq:Z1})$ (with $h=0$) we can compute directly the moment generating function 
\begin{eqnarray}
\psi(M_1,M_2)&=&Z^{-1}\mathbb{E}_{\s^1,\s^2}\left[e^{M_1\frac{\sum_{i=1}^{N_1}\s^1_i}{\sqrt{N_1}}+M_2\frac{\sum_{i=1}^{N_2}\s^2_i}{\sqrt{N_2}}}e^{-\b H_N(\s^1,\s^2)}\right]\nonumber\\
&=&Z^{-1}\mathbb{E}_{\s^1}\left[e^{M_1\frac{\sum_{i=1}^{N_1}\s^1_i}{\sqrt{N_1}}}e^{N_1\frac{1-\a}{\a}u_1\left(\frac{\a\b}{N_1}\left(\sum_{i=1}^{N_1}\s^1_i +\frac{\sqrt{N_1}M_2}{(\b/\b_c)^2}\right)\right)}\right].
\end{eqnarray}
Following the proof of Proposition $\ref{pr:flu}$ on the fluctuations of a generalised ferromagnets at inverse temperature $\frac{1-\a}{\a}$ and energy $u(x)=u_1(\a\b x)$, we get $(\ref{eq:fg})$, \ie
$$
\psi(M_1,M_2)\to Z^{-1}e^{\frac{1}{2}\left(\frac{1}{1-(\b/\b_c)^4}\right)(M_1^2+M_2^2)+\left(\frac{(\b/\b_c)^2}{1-(\b/\b_c)^4}\right)M_1M_2},
$$ 
which gives (\ref{eq:bivariato}).
\end{proof}
As in the CW model, we see a different behaviour at the critical point. It is convenient to recall the definition of the magnetisation direction $w$ in (\ref{eq:w-bip}) and to denote the orthonormal vector as $w^\perp$.
\begin{proposition}\label{Prop:bip-fluct-crit}
Set $\b=\b_c$ and $h=0$. Moreover let us define the random vector $\mathcal{X}_c$ with 
\be\label{eq:density-Xc}
P(\mathcal{X}_c\in A):=Z^{-1}\int_{A} dv_1dv_2 e^{\frac{(\a P_4^1v_1^4+ (1-\a)P_4^2v_2^4)}{4!}}\d(v\cdot w^\perp)\,,
\ee
for any $A$, Borel set in $\R^2$, and $Z$ is a normalisation constant. It holds
\be\label{eq:bip-fluct-crit} 
N^{\frac14}(m_1,m_2)\overset{d}{\longrightarrow} \mathcal{X}_c\,.
\ee
\end{proposition}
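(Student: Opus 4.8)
\emph{Proof plan.} The plan is to exploit the generalised--ferromagnet reduction \eqref{eq:Z1}, exactly as in the proof of Proposition \ref{prop:Flutt-Bip}, but tuned to the critical scaling $N^{1/4}$. Marginalising party $2$ at $h=0$ turns the model into a single generalised Curie--Weiss system for $m_1$ with weight $\exp\big(N_2\,u_1(\a\b m_1)\big)$ (recall $u_1=\phi_2$ and $N_1\tfrac{1-\a}{\a}=N_2$), while conditionally on $\boldsymbol\sigma^1$ the party-$2$ spins are i.i.d.\ with tilt $\a\b m_1$. First I would write the joint moment generating function
\be
\psi_c(t_1,t_2):=\meanv{e^{t_1 N^{1/4}m_1+t_2 N^{1/4}m_2}}=Z_N^{-1}\,\E_{\s^1}\exp\Big(t_1N^{1/4}m_1+N_2\,u_1\big(\a\b m_1+\tfrac{t_2}{(1-\a)N^{3/4}}\big)\Big)\,,
\ee
the $t_2$--field appearing because each party-$2$ spin feels the extra term $t_2N^{1/4}/N_2$. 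Taylor expanding $u_1$ around $\a\b m_1$ and using that $N^{1/4}m_1$ is tight under the reduced measure, the $\OOO{N_2\e_N^2}$ remainder ($\e_N\sim N^{-3/4}$) is $\ooo1$ and the linear term collapses to $t_2\,\b_c\a\,N^{1/4}m_1+\ooo1$, since $u_1'(\a\b m_1)=\a\b m_1+\OOO{m_1^3}$. Hence $\psi_c(t_1,t_2)\to \E[e^{(t_1+t_2\b_c\a)Y}]$, where $Y$ is the limit of $N^{1/4}m_1$: the second order parameter is \emph{slaved} to the first, leaving a single genuine degree of freedom $Y$.

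Second, I would carry out the critical Laplace analysis for $Y$. The law of $m_1=a$ under the reduced measure has density $\propto\exp\big(N_1 G(a)\big)$ with $G(a)=-I_1(a)+\tfrac{1-\a}{\a}u_1(\a\b a)$, where $I_1$ is the Cram\'er rate function of a single party-$1$ spin, $I_1(a)=\tfrac{a^2}{2}-\tfrac{P_4^1}{4!}a^4+\OOO{a^6}$. The quadratic coefficient of $G$ equals $\tfrac12(\a(1-\a)\b^2-1)$, which vanishes precisely at $\b=\b_c$; the surviving quartic coefficient, evaluated with $\b_c^2\a(1-\a)=1$, is $\tfrac1{4!}\big(P_4^1+\tfrac{\a}{1-\a}P_4^2\big)$. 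Setting $a=N^{-1/4}y$ then gives $N_1G(N^{-1/4}y)\to\tfrac{\a}{4!}\big(P_4^1+\tfrac{\a}{1-\a}P_4^2\big)y^4$, so $Y$ has density $\propto e^{\frac{\a}{4!}(P_4^1+\frac{\a}{1-\a}P_4^2)y^4}$. Writing $v=s\,w$ with $w=(\sqrt{1-\a},\sqrt{\a})$, i.e.\ $v_1=s\sqrt{1-\a}$, $v_2=s\sqrt\a$, a direct substitution ($y=v_1$) shows that this exponent equals $\tfrac{\a P_4^1 v_1^4+(1-\a)P_4^2v_2^4}{4!}$, which is exactly the one in \eqref{eq:density-Xc}.

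Third, I would assemble the joint limit. Conditionally on $\boldsymbol\sigma^1$, the empirical mean $m_2$ concentrates at $u_1'(\a\b m_1)$ with fluctuations $\OOO{N^{-1/2}}$, negligible after the $N^{1/4}$ rescaling, whence $N^{1/4}m_2\to\b_c\a\,Y$. Since $\b_c\a=\sqrt{\a/(1-\a)}=w_2/w_1$, the limiting vector $\big(\lim N^{1/4}m_1,\lim N^{1/4}m_2\big)=Y(1,\b_c\a)$ is supported on the line $\R w$; this is encoded by $\d(v\cdot w^\perp)$, and the law of the free coordinate is that of $Y$, with the quartic density just matched. Comparison with $\psi_c$ then yields $N^{1/4}(m_1,m_2)\overset{d}{\to}\mathcal X_c$.

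The main obstacle is the rigour of the critical Laplace step: one must upgrade the formal expansion of $\exp(N_1G)$ to genuine convergence in distribution, i.e.\ establish tightness of $N^{1/4}m_1$ and uniform integrability of the exponential weights, so that the region $|m_1|\gg N^{-1/4}$ contributes negligibly and the $\ooo1$ errors in $\psi_c$ may be dropped inside the expectation. Here the hypotheses \eqref{cond:CW-subGauss} are essential: $r_a(t)<\infty$ gives the tail control needed for tightness and finite exponential moments, while $P_4^a<0$ makes the quartic coefficient of $G$ strictly positive, ensuring both that $\b_c$ is a genuine critical point and that the limiting quartic density is integrable. The cleanest way to discharge this is to invoke directly the critical-fluctuation statement for generalised ferromagnets proved in Appendix \ref{sec:genferr} (the analogue of Proposition \ref{pr:flu}), applied to energy $u_1(\a\b\,\cdot)$ at inverse temperature $\tfrac{1-\a}{\a}$, just as the subcritical case was reduced to Proposition \ref{pr:flu}.
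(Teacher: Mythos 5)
Your proposal is correct and reaches the stated density, but the central computation is organised differently from the paper's. The paper also marginalises party $2$ and reduces to a generalised ferromagnet with energy $\frac{1-\a}{\a}u_1(\a\b\,\cdot)$ at its critical point, but it then linearises the exponential weight through the conjugate measure $\nu^\b_u$ of Bernstein's representation (\ref{eq:dual}), integrates out the $\s^1$ spins exactly to produce $e^{N_1\phi_1(\cdot)}$, and extracts the quartic limit from a moderate-deviations refinement of the CLT for $\mu^{\b_c}_{u,N}$, exactly as in Proposition \ref{pr:fluatc}; the two quartic cumulants $P_4^1$ and $\frac{\a}{1-\a}P_4^2$ then enter additively from the spin side and the dual-measure side respectively. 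You instead work directly with the law of $m_1$ under the reduced measure, expanding the Cram\'er rate function $I_1$ against $\frac{1-\a}{\a}u_1(\a\b a)$ and performing a critical Laplace rescaling $a=N^{-1/4}y$, with $m_2$ handled by conditional concentration (``slaving''). This buys a more transparent identification of where the quadratic term cancels at $\b_c$ and where each $P_4^a$ comes from, at the cost of the tightness/uniform-integrability issue you correctly flag; your proposed remedy --- invoking Proposition \ref{pr:fluatc} for the reduced model --- is precisely what the paper does, so the gap is closable by the paper's own appendix. A useful by-product of your route: your coefficient $\a P_4^1+\frac{\a^2}{1-\a}P_4^2$ for the $v_1$-coordinate does match the density (\ref{eq:density-Xc}) restricted to the line $\R w$ (and agrees with the symmetric computation obtained by marginalising party $1$), whereas the prefactor $\a P_4^1+(1-\a)P_4^2$ appearing in the last display of the paper's proof is consistent with (\ref{eq:density-Xc}) only at $\a=1/2$ (the case treated numerically), so your derivation serves as an independent confirmation of the constant in the statement.
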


\begin{proof}
First we notice
$$
\int\nu^{\frac{1-\a}{\a}}_{u_1}(dy)e^{xy}=\left(\E_{\s^2}\left[e^{x\s^2}\right]\right)^{\frac{1-\a}{\a}}\,,
$$
thereby $P_4^{u_1}=\frac{1-\a}{\a}P_4^2$. Then we follow the main steps of the proofs of Proposition $\ref{prop:Flutt-Bip}$ and \ref{pr:fluatc}. As $\b=\b_c$ we compute the joint moment generating function ($Z$ gives the proper normalisation in each step):
\bea
\psi(M_1,M_2)&=&Z^{-1}\mathbb{E}_{\s^1,\s^2}\left[e^{N^{\frac14}M_1m_1+N^{\frac14}M_2m_2+N\sqrt{\a(1-\a)}m_1m_2}\right]\nn\\
&=&Z^{-1}\mathbb{E}_{\s^1}\left[ e^{N^{\frac14}M_1m_1+N_1\frac{1-\a}{\a}u_1\left(\sqrt{\frac{\a}{1-\a}}m_1+\frac{\a^{3/4} }{1-\a}M_2 N_1^{-\frac34} \right)   }\right]\nn\\
&=&Z^{-1}\int \mu^{\frac{1-\a}{\a}}_{N_1}(dz)\E_{\s^1}\left[e^{\sum_i\s^1_i\left(\a^{-1/4}M_1N_1^{-\frac34}+\frac{z}{\sqrt{N_1}}\sqrt{\frac{\a}{1-\a}}\right)}\right]e^{M_2\frac{\a^{3/4} }{1-\a} zN_1^{-\frac14}}\nn\\
&=&Z^{-1}\int \mu^{\frac{1-\a}{\a}}_{N_1}\left(dz \frac{\sqrt{1-\a}}{\a^{1/4}}N_1^{\frac14}\right) e^{N_1\phi_1\left(\a^{-1/4}M_1N_1^{-\frac34}+\a^{1/4}zN_1^{-\frac14} \right)+zM_2\sqrt{\frac{\a}{1-\a}}}\nn\\
&=&Z^{-1}\int \mu^{1}_{N_1}\left(dz (\a N_1)^{\frac14}\right) e^{N_1\phi_1\left(\a^{-1/4}M_1N_1^{-\frac34}+\a^{1/4}zN_1^{-\frac14} \right)+zM_2\sqrt{\frac{\a}{1-\a}}}\nn\\
&\overset{N\to\infty}{\longrightarrow}&Z^{-1}\int dz \ e^{(\a P_4^1+(1-\a)P_4^2)\frac{z^4}{4!}+z(M_1+\sqrt{\frac{\a}{1-\a}}M_2)}\,,\nonumber 
\eea
which is the moment generating function of the density in (\ref{eq:density-Xc}).
\end{proof}

\vspace{1cm}

Switching on the external fields, we find different types of solutions to the equation (\ref{sceq}), arising from a more varied landscape for $A_{1}(m_1;\b,h_1,h_2)$ and $A_{2}(m_2;\b,h_1,h_2)$ in (\ref{eq:MargFree}). As usual the global maxima represent thermodynamically stable states, while the local ones are related to meta-stable states.

First we notice that $M_1=0$ is a solution of (\ref{sceq}) only if 
\be\label{eq:crit1}
h_1+ (1-\a)u'_1(\beta h_2)=0\,.
\ee  
This line separates the regions with positive and negative $M_1$ magnetisation in the plane $(h_1,h_2)$.  Analogously, we get for the other party the line 
\be\label{eq:crit2}
h_2+ \a u'_2(\beta h_1)=0\,
\ee
to divide regions of positive and negative $M_2$, see Fig. (\ref{fig1}). Note that for $\b<\b_c$ the two lines intersect only in $(h_1,h_2)=(0,0)$, while for $\b>\b_c$ other two intersection points appear, such that $h_1h_2 <0$. 

\begin{figure}
\includegraphics[width=0.45\textwidth]{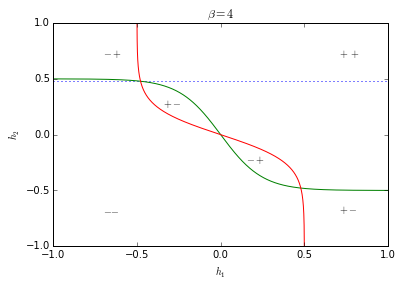}
\includegraphics[width=0.45\textwidth]{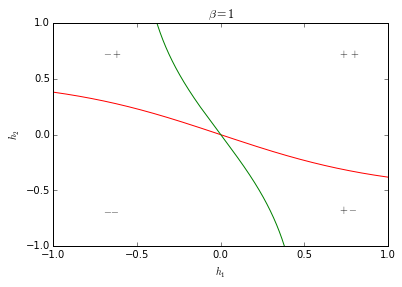}
\caption{Phase diagram in the plane $(h_1,h_2)$ for different values of $\b$: the two critical lines $h_1=- (1-\a)u'_1(\beta h_2)$ and $h_2=- \a u'_2(\beta h_1)$ separates  regions with positive and negative magnetisation $M_1$ (respectively $M_2$). Right panel: $\b=1<\b_c$ the only intersection is in the origin. Left panel: $\b=4>\b_c$, two non trivial intersection points appear at $h_2=\pm h_c^2$.  }\label{fig1}
\end{figure}

It is interesting to understand what happens along the critical lines.  At any fixed $\b$, we stay for example on the first critical line, parametrising it by $h_2$.  First we can look for the existence of non zero solutions by checking the derivative in zero of both sides of ($\ref{sceq}$), that yields 
\be\label{eq:cond1}
1<\b^2\a(1-\a)u''_1(\b h_2)\,.
\ee
Condition $(\ref{eq:cond1})$ defines a critical field $h_2=h_c^1(\b)$. This ensures the existence of two non-zero solutions of ($\ref{sceq}$) starting continuously from $M_1=0$ for $h_2<h_2^c$. It does not exclude the existence of such solutions for $h_2>h_2^c$, resulting from a first order phase transition, see Fig. \ref{fig2}. 

\begin{figure}
\includegraphics[width=0.5\textwidth]{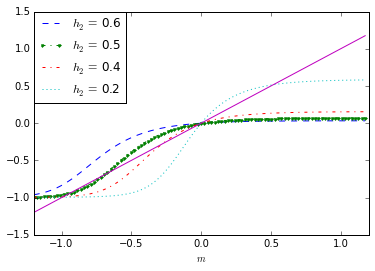}\\
\includegraphics[width=0.45\textwidth]{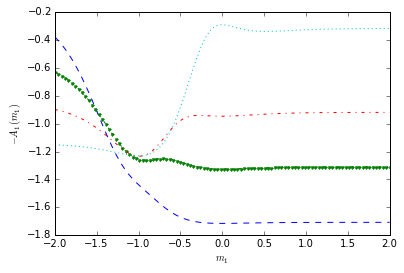}
\includegraphics[width=0.45\textwidth]{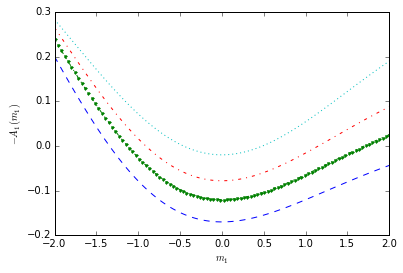}
\caption{On the first critical line $h_1=- (1-\a)u'_1(\beta h_2)$. Upper panel: representation of solutions of equation $(\ref{sceq})$ for $M_1$ at $\b=4>\b_c$ and different values of the field $h_2$. Bottom  panels: pressure $A_1(m_1;\b,h_1,h_2)$ for different values of the field $h_2$ and $\b=4$  (left), $\b=1$ (right).  For $\b>\b_c$ a first order phase transition at $m_1<0$ occurs for $h_2$ small enough, while for $\b<\b_c$ or high field $h_2$ the state $M_1=0$ is the only, thus stable, minimum. }\label{fig2}
\end{figure}

We have studied numerically the balanced ($\a=1/2$) bipartite model, with both parties made of Bernoulli $\pm1$ spins.  Fig. $\ref{fig2}$ shows the pressure  $A_1(m_1;\b,h_1,h_2)$, whose extremal points are solutions of  ($\ref{sceq}$), for two different values of $\b$. For $\b<\b_c$, $- A_1(m_1)$ has only one minimum in zero for each $(h_1,h_2)$ on the first critical line: note that, in this regime we only have the trivial intersection of the two critical lines, and equation $(\ref{eq:cond1})$ has no solution.  For  $\b>\b_c$ the landscape becomes richer by varying $(h_1,h_2)$ along the first transition line: high positive values of $h_2$ correspond to a single global minimum in $M_1=0$; lowering $h_2$ a local minimum appears at $M_1<0$, and lowering it further the minimum becomes global; finally a third local minimum $M_1\geq 0$ emerges continuously from zero. The situation is symmetric for negative field $h_2$.  In Fig. $\ref{fig4}$  all these minima are plotted as a function of $h_2$, distinguishing between local (metastable) and global (stable) states. It is possible to recognise three different critical values for the field $h_2\in\mathbb{R}^+$, $h_c^1<h_c^2<h_c^3$, such that
\bigskip
\begin{itemize}
\item if $h_2>h_c^3$ the only (global) minimum is at $M_1=0$;\\
\item if $h_2\in(h_c^2,h_c^3)$ $M_1=0$ is  the global minimum, but a local minimum $M_1=M_-< 0$ appears as a first order phase transition;\\
\item if $h_2\in(h_c^1,h_c^2)$ the minimum $M_1=0$ becomes  local while $M_-$  global; \\
\item if $h_2\in (0,h_c^1)$ a second order phase transition occurs with the emergence of a local minimum $M_1=M_+\geq 0$; $M_-$ is still the global minimum while $M_1=0$ becomes unstable (a maximum).
\end{itemize}
\bigskip
The second order phase transition at $h_2=h_c^1$ coincides with the bifurcation related to equation $(\ref{eq:cond1})$, although it corresponds always to a local minimum of the free energy, \ie a metastable state. 

Conversely, the first order phase transition related to $h_2=h_c^2$ can be better understood by checking the behaviour of the other party.  In fact it corresponds to the (non trivial) intersection point of the two critical lines of Fig. $\ref{fig2}$
\be 
\begin{cases}
h_1+ (1-\a)u'_1(\beta h_2)=0\\
h_2+ \a u'_2(\beta h_1)=0\,.
\end{cases}
\ee
Crossing $h_c^2$, the second party leaves the positive magnetisation region for the negative one, undergoing a first order phase transition in $M_2$ that reflects on the first party as well, see Fig $\ref{fig5}$.

\begin{figure}
\includegraphics[width=0.45\textwidth]{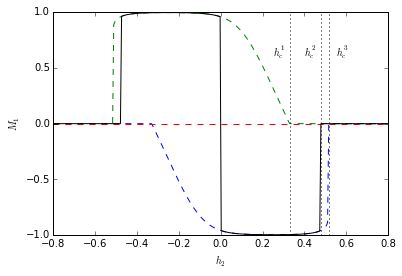}
\includegraphics[width=0.45\textwidth]{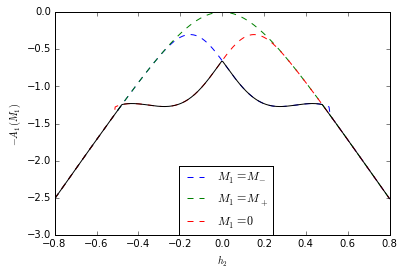}
\caption{Stable, metastable and unstable states as $\b=4>\b_c$. Left panel: solutions of equation $(\ref{sceq})$ for $M_1$ as a function of $h_2$. The black continuous line indicates the global, stable, solution as can be checked comparing the value of the corresponding pressure (right panel).  The vertical lines in the left panel mark the emergence of the phase transitions defined by the critical values $h_2=h_c^1,h_c^2,h_c^3$.}\label{fig4}
\end{figure}  
  
\begin{figure}
\includegraphics[width=0.5\textwidth]{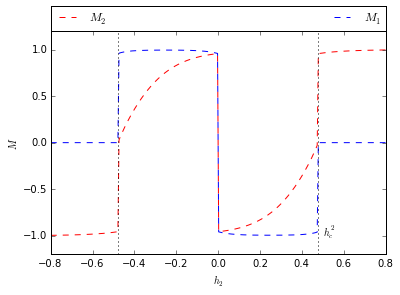}
\caption{magnetisations $M_1$ and $M_2$ on the first critical line $h_1=- (1-\a)u'_1(\beta h_2)$ and $\b=4>\b_c$ as a function of $h_2$. As soon as $M_2$ leaves the positive magnetisation region for the negative one, at $h_2=h_c^2$, a first order phase transition occurs, involving also $M_1$.  }\label{fig5}
\end{figure}
  
\section{Discussion}\label{Sect-Discussion}

Finally we point out some interesting problems not covered by our analysis. These mainly concern the phase transitions at $h_1,h_2\neq0$ that we have observed numerically for the bipartite model. We can ideally separate the problem of characterisation of these transitions from the study of the fluctuations of the magnetisations around them.

One possibility to study the phase transitions could be to look at the regularity of solutions of the Hamilton Jacobi equation (\ref{eq:HJ}). It has been shown in \cite{new} (see also \cite{russi}\cite{losanna}\cite{gb}) that in the CW model the phase transition corresponds to the formation of a shock wave in $(x=0, t>1)$ for the velocity field associated to (\ref{eq:HJ}). Note that we have skipped this approach since we could use the equivalence of bipartite and generalised ferromagnets, but it would be interesting to interpret the conditions (\ref{eq:crit1}) and (\ref{eq:crit2}) as shock lines. On the other hand the duality is at least not evident for multipartite ferromagnets, so definitely another approach would be helpful. 

Moreover, the same argument was used for finding the fluctuations of the bipartite model at zero external fields. Again some other ideas are needed to study the fluctuations in the case $\nu\geq3$, generalising the results of \cite{fedcon} to the non-convex models. 

In closing, here is a list of open questions we single out: first to prove mathematically the numerical results of Section \ref{SezioneBip} and possibly study the fluctuations of the magnetisations around these transitions; secondly, a further investigation of the multipartite models $\nu\geq3$ should include the fluctuations of the magnetisations at zero external fields; lastly it would be interesting to see the phase diagram for non zero external fields as $\nu\geq3$.

\bigskip

\subsection*{Acknowledgements} The authors thank F. Guerra for a fruitful correspondence on \cite{paris} and B. Schlein and M. Strani for some useful discussions. G. G. is supported by the Swiss National Science Foundation. D. T. is partially supported by "Avvio alla Ricerca 2014", Sapienza University of Rome, by the Centro Ennio De Giorgi, SNS (Pisa, Italia) and by GNFM, INdAM.

\newpage

\appendix

\section{Generalised Ferromagnets}\label{sec:genferr}

\ni We study here a generalisation of the CW model. Related results can be found in \cite{ee}, in more general settings. The achiements of this appendix are used in Section \ref{SezioneBip}. 

\ni The system in our interest is made by $N$ i.i.d. spin r.vs $\s_i$, $i=1,\dots,N$, with symmetric distribution satisfying (\ref{cond:CW-subGauss}). 
In addition let $u(x)$ be a symmetric, non negative, uniformly convex and smooth function with $u(0)=0$ and $\frac{u''(0)x^2}{2}\geq u(x)$. The Hamiltonian of the system in terms of the magnetisation reads
\be\label{eq:H}
H_N[u]:=-Nu(m)
\ee
and the associate partition function and pressure are respectively
\bea\label{eq:Z,A,f}
Z_N(\b, h)&:=&\mathbb{E}_{\s} e^{N\b u(m)+Nhm}\,,\nn\\
A_N(\b, h)&:=&\frac1N\log Z_N(\b, h)\nn\,.
\eea
As usual when we drop the subscript $N$ is to indicate the thermodynamic limit.
\subsection{Free Energy}
The limit form for the pressure is given by the following
\begin{proposition}\label{Prop:CW-FreeEnergy}
It holds
\be\label{eq:ACWTh}
A(\b,h)=\inf_N A_N(\b,h)=\max_{M\in\R}\left[ \b(u(M)-u'(M)M)+\phi(\b u'(M)+h) \right]\,.
\ee
\end{proposition}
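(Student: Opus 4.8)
The plan is to sandwich $A_N$ between a lower bound valid for every $N$ and an upper bound coming from large deviations, and then to match the resulting variational expression with the one in \eqref{eq:ACWTh} by Legendre duality. Throughout write $F(M):=\b(u(M)-u'(M)M)+\phi(\b u'(M)+h)$ for the bracket in \eqref{eq:ACWTh}. First I would establish the lower bound using only the convexity of $u$. Since a convex function lies above each of its tangents, $u(m)\geq u(M)+u'(M)(m-M)$ for every $M\in\R$. Inserting this into $Z_N=\E_\s e^{N\b u(m)+Nhm}$ linearises the Hamiltonian, so the spin expectation factorises over the i.i.d. sites; using $Nm=\sum_i\s_i$ and $\phi(t)=\log\E_\s e^{t\s}$,
\be
Z_N\geq e^{N\b(u(M)-u'(M)M)}\E_\s e^{(\b u'(M)+h)\sum_i\s_i}=e^{N\left(\b(u(M)-u'(M)M)+\phi(\b u'(M)+h)\right)}=e^{NF(M)}.
\ee
Hence $A_N(\b,h)\geq F(M)$ for every $N$ and every $M$, so $\inf_N A_N\geq\sup_M F(M)$. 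This already accounts for the assertion $A=\inf_N A_N$, once the limit is shown to coincide with $\sup_M F(M)$.

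For the matching upper bound I would turn to large deviations. The empirical mean $m=\frac1N\sum_i\s_i$ obeys Cram\'er's theorem, \ie a large deviation principle on $\R$ with good rate function $\phi^*$, the Legendre transform of $\phi$. The map $x\mapsto\b u(x)+hx$ is continuous, and the growth assumption $u(x)\leq\frac{u''(0)}{2}x^2$ together with $m^2\leq\frac1N\sum_i\s_i^2$ (Jensen) and the sub-Gaussian hypothesis $r(t)=\E_\s e^{t\s^2}<\infty$ yields, for any $\gamma>1$, the bound $\E_\s e^{N\gamma\b u(m)}\leq r(\gamma\b u''(0)/2)^N<\infty$; this is precisely the moment condition required to apply Varadhan's lemma (see \cite{ellis}). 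I would thereby conclude
\be
A=\lim_N A_N=\sup_{x\in\R}\left(\b u(x)+hx-\phi^*(x)\right).
\ee

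It then remains to identify this Laplace functional with $\sup_M F(M)$. At an interior maximiser $x^*$ the stationarity condition reads $\b u'(x^*)+h=(\phi^*)'(x^*)$; since $\phi$ is smooth and uniformly convex, $(\phi^*)'=(\phi')^{-1}$, whence $x^*=\phi'(\b u'(x^*)+h)$, the self-consistency relation (cf. \eqref{sceq}). Writing $t^*=\b u'(x^*)+h$ and $\phi^*(x^*)=t^*x^*-\phi(t^*)$ and substituting, the Laplace value collapses to $\b(u(x^*)-u'(x^*)x^*)+\phi(\b u'(x^*)+h)=F(x^*)\leq\sup_M F(M)$. Combined with the lower bound this gives $A=\sup_M F(M)$; and since $A_N\geq\sup_M F(M)=A$ for all $N$ while $A_N\to A$, also $\inf_N A_N=A$, which completes the identification.

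The step I expect to be the main obstacle is the upper bound: one must certify that Varadhan's lemma genuinely applies and that the supremum defining the Laplace functional is attained at a finite interior point. This is exactly where the hypotheses enter — the quadratic domination $u(x)\leq\frac{u''(0)}{2}x^2$ and the integrability $r(t)<\infty$ control the tails of $m$ and keep $\frac1N\log Z_N$ finite, while uniform convexity of $\phi$ (hence differentiability of $\phi^*$) makes the critical-point analysis and the inversion $(\phi^*)'=(\phi')^{-1}$ legitimate. Should one prefer to avoid citing Varadhan, a hands-on alternative in the spirit of the paper is to cover $\R$ by small cells and bound the convex $u$ from above on each cell by its chord; summing the subexponentially many resulting linear contributions reproduces the same upper bound.
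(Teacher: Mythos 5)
Your argument is correct in substance, but it takes a genuinely different route from the paper on both non-trivial points. The lower bound via the tangent inequality $u(m)\geq u(M)+u'(M)(m-M)$ is identical to the paper's. For the upper bound, however, the paper does not invoke Cram\'er/Varadhan: it covers the range of $m$ by cells $[M_i,M_{i+1}]$, bounds the contribution of each cell by the tangent approximation at a point $M_i^*$ of the cell (so that only the \emph{same} trial functional $A(\b,h;M)$ ever appears, and no Legendre transform of $\phi$ is needed), and controls the tail $\{m>\bar M\}$ by hand using $u(x)\leq u''(0)x^2/2$, $Nm^2\leq\sum_i\s_i^2$, Markov's inequality and $r(t)<\infty$ -- exactly the "hands-on alternative" you mention at the end. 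Your large-deviation route buys generality and brevity but shifts the burden onto two points you only flag rather than settle: (a) the moment condition for Varadhan must include the linear term $Nhm$ as well as $N\b u(m)$ (easily absorbed by $hm\leq h^2/4+m^2$, as the paper does in its tail estimate); and (b) the identification $\sup_x\bigl(\b u(x)+hx-\phi^*(x)\bigr)\leq\sup_M F(M)$ requires the Laplace supremum to be attained at an interior point where $\phi^*$ is differentiable -- this does follow from the hypotheses, since $t\s\leq t^2/(2\e)+\e\s^2/2$ gives $\phi^*(x)\geq \e x^2/2-\log r(\e/2)$ for \emph{every} $\e>0$, which dominates $\b u(x)+hx$ for $\e$ large and forces a finite interior maximiser, but you should write this out. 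Finally, for the first equality in \eqref{eq:ACWTh} the paper uses superadditivity-type subadditivity $Z_N\leq Z_{N_1}Z_{N_2}$ (convexity of $u$), whereas you deduce $\inf_N A_N=A$ from $A_N\geq\sup_M F(M)$ for all $N$ together with convergence of $A_N$; your derivation is valid and arguably cleaner, since it makes the subadditivity argument redundant once the limit is identified.
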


\ni The proof is done by using the equivalence of ensembles (that is a standard strategy). We report it for completeness. 

\begin{proof}
Let $N_1+N_2=N$. The convexity of the interaction yields
$$
Z_N(\b,h)\leq Z_{N_1}(\b,h)Z_{N_2}(\b,h)\,,
$$
whence  the first equality in (\ref{eq:ACWTh}) follows. Now we introduce the trial partition function
\be
Z_N(\b,h;M):=\mathbb{E}_\s \exp\left[N\left(\b u(M)-\b u'(M)M+m\left(\b u'(M)+h\right)\right)\right].
\ee
We have $u(m)\geq u(M)+u'(M)(m-M)$, and so for any $M$
$$
Z_N(\b,h)\geq Z_N(\b,h;M)\,.
$$
Thus we put
\be
A(\b,h;M):= \lim_N\frac1N\log Z_N(\b,h; M)=\b(u(M)-u'(M)M)+\phi\left( \b u'(M)+h\right)
\ee
and since $A(\b,h;M)$ is concave in $M$, we find the bound
\be\label{eq:A>A(M)-MAX}
A(\b,h)\geq\max_M A(\b,h;M).
\ee

\ni Assume now for simplicity that $h\geq0$, so that the magnetisation must be non negative. Let us fix a $\bar M>0$. For $0=M_0<M_1<\dots<M_N=\bar M$ we use the partition $I_i:=[M_i,M_{i+1}]$ to write
\be\label{eq:Z-microcanonico}
Z_N(\b,h)=\E \left[\sum_{i=0}^{N-1} \chi(m\in I_i)e^{N\b u(m)+Nhm}\right]+\E \left[\chi(m>\bar M)e^{N\b u(m)+Nhm}\right]\,.
\ee
Using $u(x)-\frac{u''(0)x^2}{2}\leq0$ and $m^2+\frac{h^2}{4}\geq \frac{m^2}{2}+hm$, we can estimate the tail term as
\bea\label{eq:stimaL2L2Linf}
\E \left[\chi(m>\bar M)e^{N\b u(m)+Nhm}\right]&\leq& 
e^{N\frac{h^2}{4}} \E[\chi(m>\bar M)e^{\b u''(0)N m^2}]\nn\\
&\leq&\b u''(0)N e^{N\frac{h^2}{4}}\E[e^{(\b u''(0)+1) N m^2}] \b N\int_{\bar M^2}^\infty d \l e^{-N\l}\nn\\
&\leq&\b u''(0)N\left(r(\b u''(0)+1)e^{\frac{h^2}{4}-\bar M^2}\right)^N\,,
\eea
where we have also exploited Markov inequality, $N m^2\leq \sum_i\s_i^2$ and (\ref{cond:CW-subGauss}). 

\ni On the other hand in each $I_i$ we can pick a point $M^*_i$ so that up to a small error $e^{\bar M/N}$ we have
\bea
\E \left[\sum_{i=0}^{N-1} \chi(m\in I_i)e^{N\b u(m)+Nhm}\right]&\simeq&\mathbb{E}_{\s}\sum_{i=0}^{N-1} \d_{m,M^*_i} \exp\left[N\left(\b u(M^*_i)-\b u'(M^*_i)M^*_i+m\left(\b u'(M^*_i)+h\right)\right)\right]\nn\\
&\leq&\bar M\max_{M\in[0,\bar M]}Z_N(\b,h;M)\,.
\eea
These two bounds holds for any $\bar M$. Now let $\d>0$ and choose $\bar M=N^\d$. We get
\be
A_N(\b,h)\leq\d\frac{\log N}{N}+\frac1N\log\left(\max_{M\in[0,N^\d]} Z(\b,h;M)+\OOO{e^{-N^{1+2\d}}}\right)\,,
\ee
which, as $N\to\infty$, leads to
\be
A_N(\b,h)\leq\max_{M\in\R} A(\b,h;M)\,,
\ee
that completes the proof.
\end{proof}

\begin{remark}
The assumption $u(x)\leq u''(0)x^2/2$ is used only in the estimate of the tail term of (\ref{eq:Z-microcanonico}). Therefore it turns out to be immaterial if we consider only compactly supported spin distributions (as $\pm1$ spin), for which formula (\ref{eq:ACWTh}) remains valid also without this hypothesis (this corresponds somehow to allow $p$-spin interactions). For general spin  distributions with fast enough decay at infinity, one needs some further assumption on the energy: for instance, we know that quadratic energy balances well spin distribution as in (\ref{cond:CW-subGauss}) \cite{ellis}. We are not interested here in identifying precisely this relation for generic $u(x)$, since $u(x)\leq u''(0)x^2/2$ is enough to study bipartite models (see Section \ref{SezioneBip}).
\end{remark}

\subsection{Phase Transition and Fluctuations}
Optimising (\ref{eq:ACWTh}) we find
$$
\b u''(M)\left(M-\phi'(\b u'(M)+h)\right)=0\,,
$$
that immediately implies
\be\label{eq:M-mono-self}
M=\phi'(\b u'(M)+h)\,.
\ee
\ni This is the standard self consistent equation for ferromagnets. For $h\neq0$ we have always a sole solution; as $h=0$, $M=0$ still is a solution, but we find other two symmetric solutions if
$$
\partial_M\left[\frac{\mathbb{E}_\s \s e^{\s(\b u'(M)+h)} }{\mathbb{E}_\s e^{\s(\b u'(M)+h)} }\right]_{M=0}=\b u''(0)>1\,.
$$
Thus the critical point is
\be\label{eq:critical}
\b_c=\frac{1}{u''(0)}.
\ee

\ni As a last step of our analysis, we study the fluctuations around the transition point. We introduce the magnetic susceptivity
\be
\chi(\b,h):=\partial_hM(\b,h)= \lim_{N\to\infty}N \meanv{(m - M(\b,h))^2}\,.
\ee 
Setting
$$
V(\b,h):=\frac{\mathbb{E}_\s \s^2 e^{\s(\b u'(M)+h)} }{\mathbb{E}_\s e^{\s(\b u'(M)+h)} }-\left(\frac{\mathbb{E}_\s \s e^{\s(\b u'(M)+h)} }{\mathbb{E}_\s e^{\s(\b u'(M)+h)} }\right)^2
$$
one sees immediately that
$$
\chi(\b,h)=\frac{V(\b,h)}{1-\b u''(M)V(\b,h)}\,.
$$
We shorten $\chi(\b,0):=\chi(\b)$, and we notice $\chi(0)=\mathbb{E}[\s^2]$. Thus, as long as $M(\b,h)=0$ it must be
$$
\chi(\b)=\frac{1}{1-\b u''(0)},
$$
\ie  the susceptivity is divergent at $\b_c$ where spontaneous magnetisation occurs. 
We have the following
\begin{proposition}\label{pr:flu}
Let $\beta\in [0,\beta_c)$ and $h=0$. Then
\begin{equation}
\sqrt{N}m_N(\sigma)\overset{d}{\longrightarrow} \mathcal{N}(0,\chi(\beta)) 
\end{equation}
\end{proposition}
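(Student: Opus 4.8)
The plan is to establish the central limit theorem through the moment generating function of the rescaled magnetisation under the Gibbs measure, showing it converges to that of $\mathcal N(0,\chi(\b))$. Writing $X_N:=\sqrt N\,m=\tfrac1{\sqrt N}\sum_{i=1}^N\s_i$ and recalling $h=0$, I would start from
\be
\psi_N(t):=\meanv{e^{tX_N}}_N=\frac{\E_\s\!\left[e^{tX_N+N\b u(m)}\right]}{\E_\s\!\left[e^{N\b u(m)}\right]}\,,
\ee
the goal being $\psi_N(t)\to e^{\chi(\b)t^2/2}$ for $t$ near the origin; by the continuity theorem for moment generating functions this yields the stated convergence in distribution.

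The heart of the computation is to reduce the interaction to its quadratic part. Since $u$ is even and smooth with $u(0)=0$, we have $u(m)=\tfrac{u''(0)}2m^2+\OOO{m^4}$, so that $N\b u(m)=\b\tfrac{u''(0)}2X_N^2+N\b\,\OOO{m^4}$; on the scale $X_N=\OOO1$ the remainder is $\OOO{X_N^4/N}\to0$ and only the quadratic term $\b\tfrac{u''(0)}2X_N^2$ survives. I would then linearise it with a Hubbard--Stratonovich transformation,
\be
e^{\b\frac{u''(0)}2X_N^2}=\frac1{\sqrt{2\pi\b u''(0)}}\int_\R dy\,e^{-\frac{y^2}{2\b u''(0)}+yX_N}\,,
\ee
which decouples the spins, so that the inner expectation factorises, $\E_\s[e^{(t+y)X_N}]=e^{N\phi((t+y)/\sqrt N)}$, and by the expansion (\ref{eq:espansione-phi}) with $\phi''(0)=1$ one has $N\phi((t+y)/\sqrt N)\to(t+y)^2/2$, i.e. the ordinary CLT for the free spins.

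Carrying out the (limiting) Gaussian integral in $y$ gives
\be
\E_\s\!\left[e^{tX_N+\b\frac{u''(0)}2X_N^2}\right]\longrightarrow\frac1{\sqrt{2\pi\b u''(0)}}\int_\R dy\,e^{-\frac{y^2}{2\b u''(0)}+\frac{(t+y)^2}2}=\frac1{\sqrt{1-\b u''(0)}}\,e^{\frac{t^2}{2(1-\b u''(0))}}\,.
\ee
The Gaussian integral converges precisely when $\tfrac1{\b u''(0)}>1$, that is $\b<1/u''(0)=\b_c$: this is exactly where the hypothesis $\b\in[0,\b_c)$ is used. Dividing by the same expression evaluated at $t=0$ (the limit of the denominator $Z_N$) cancels the prefactor and leaves $\psi_N(t)\to\exp\!\big(\tfrac{t^2}{2}(1-\b u''(0))^{-1}\big)$; since in the paramagnetic phase $\chi(\b)=(1-\b u''(0))^{-1}$, this is the moment generating function of $\mathcal N(0,\chi(\b))$.

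The main obstacle is to promote the two formal reductions above to a rigorous statement valid over the whole configuration space, not merely at the scale $X_N=\OOO1$: one must show that the quartic and higher remainder in the Taylor expansion of $u$ does not contribute in the limit, and that the Gibbs mass on large $|X_N|$ is negligible. For this I would first prove tightness of $X_N$: using $u(m)\leq\tfrac{u''(0)}2m^2$ together with $Z_N=\E_\s[e^{N\b u(m)}]\geq e^{N\b u(\E_\s m)}=1$ (Jensen, $\E_\s m=0$) and the Hubbard--Stratonovich bound, one obtains a uniform-in-$N$ sub-Gaussian control $\psi_N(t)\leq C$ for small $t$ and $\b<\b_c$. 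A tail estimate in the spirit of the proof of Proposition \ref{Prop:CW-FreeEnergy}, exploiting the sub-Gaussian assumption (\ref{cond:CW-subGauss}) and $Nm^2\leq\sum_i\s_i^2$, then confines the relevant mass to a compact set in the rescaled variable, on which pointwise convergence of the integrand and dominated convergence legitimise the passage to the limit. This tail and tightness control, rather than the explicit Gaussian computation, is where the real work lies.
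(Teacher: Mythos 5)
Your proposal reaches the correct limit and its final Gaussian computation agrees with the paper's, but the route is genuinely different. The paper never Taylor-expands $u$: invoking Bernstein's theorem it writes $e^{\b N u(x/\sqrt N)}=\int\mu^{\b}_{u,N}(dy)\,e^{yx}$ \emph{exactly}, where $\mu^{\b}_{u,N}$ is the law of a normalised sum of $N$ i.i.d.\ variables with the distribution $\nu^\b_u$ conjugate to $u$; the central limit theorem for $\mu^{\b}_{u,N}$ then produces the Gaussian kernel $\mathcal N(0,\b u''(0))$, so all non-quadratic features of $u$ sit in higher cumulants of $\nu^\b_u$ and wash out automatically. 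Your Hubbard--Stratonovich kernel is precisely the $N\to\infty$ limit of that $\mu^{\b}_{u,N}$, so the two arguments coincide at the Gaussian-convolution stage and both identify $\b<\b_c$ as the convergence condition for the $y$-integral, with the correct variance $\chi(\b)=(1-\b u''(0))^{-1}$. What you gain is elementariness (no Laplace-transform representation of $e^{\b u}$); what the paper gains is an exact linearisation of the interaction, leaving only the expansion of $\phi$ to control rather than both $\phi$ and $u$.

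The one concrete weak spot is your treatment of the quartic remainder $N\b\bigl(u(m)-\tfrac{u''(0)}{2}m^2\bigr)=\OOO{X_N^4/N}$. The tail estimate in the proof of Proposition \ref{Prop:CW-FreeEnergy} only confines the magnetisation to $|m|\leq N^{\d}$, i.e.\ $|X_N|\leq N^{1/2+\d}$, which is far too weak: killing the remainder requires $|X_N|=\ooo{N^{1/4}}$, a moderate-deviation statement that estimate does not give. The clean repair is the one your own decomposition suggests: since $u(x)\leq\tfrac{u''(0)}{2}x^2$ makes the remainder factor $R_N:=e^{N\b(u(m)-u''(0)m^2/2)}$ satisfy $0<R_N\leq1$ with $R_N\to1$ on $\{|X_N|\leq K\}$, it suffices to prove uniform (in $N$) integrability of $e^{tX_N+\b u''(0)X_N^2/2}$ under the product measure $\E_\s$, which via the exact Hubbard--Stratonovich identity reduces to a sub-Gaussian bound on $N\phi\bigl((t+y)/\sqrt N\bigr)$ uniform over $y$; near $y=0$ this follows from $P_4<0$, while for $|y|\gtrsim\sqrt N$ one must use the global growth control on $\phi$ coming from (\ref{cond:CW-subGauss}). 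This is where $\b<\b_c$ genuinely enters the tightness argument, and it is fair to note that the paper's own proof faces the symmetric issue (uniformity of $e^{N\phi((s+M_1)/\sqrt N)}=e^{(s+M_1)^2/2}+O(1/N)$ under the $\mu^{\b}_{u,N}$-integral) and treats it with comparable brevity.
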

The proof of Proposition $\ref{pr:flu}$ and \ref{pr:fluatc} (below) is based on the following representation. For any such $u$ there is a probability measure $\nu^\b_u$ which is conjugate to $u$ by Laplace transform
\be\label{eq:dual}
e^{\b u(x)}=\int\nu^\b_u(dy)e^{yx}\,.
\ee
This correspondence is one to one. One sense is obvious. To see that for any convex function there is a probability measure such that (\ref{eq:dual}) holds is much less direct. It is a consequence of Bernstein's theory of completely monotonic functions, as $e^{\b u(x)}$ turns out to be (see \cite{fe}, XIII.4).
Now, consider $\xi_1,\dots\xi_N$ i.i.d. r.vs with $\xi_1\sim \nu^\b_u$ and $X_N:=N^{-\frac12}\sum_{i=1}^N\xi_i$. Then it is defined the measure $\mu^\b_{u,N}$ by $X_N\sim\mu^\b_{u,N}$. By the central limit theorem 
\be\label{lem:mu1}
X_N\overset{d}{\longrightarrow} \mathcal{N}(0,\beta u''(0))\,,\qquad\mbox{or}\qquad\mu^{\beta}_{u,N} \overset{w}{\longrightarrow} \frac{e^{-\frac{x^2}{2\b u''(0)}}}{\sqrt{2\pi\b u''(0)}}\,.
\ee
Note that $\mu_N^{\b_c}$ approaches the standard Gaussian as $N\to\infty$. Thus we have
\be\label{lem:mu}
e^{\b N u\left(\frac{x}{\sqrt N}\right)}=\left(\E_{\xi}e^{\frac{\xi x}{\sqrt{N}}}\right)^N=
\E_{\xi_1}\dots\E_{\xi_N} e^{x\frac{\sum_{i=1}^N\xi_i}{\sqrt{N}}}
=\int\mu^\b_N(dy)e^{yx}\,,
\ee
whence we can also see that $e^{\b N u\left(\frac{x}{\sqrt N}\right)}\longrightarrow e^{\b u''(0)x^2/2}$ as $N\to\infty$. 

\begin{proof}[Proof of Proposition $\ref{pr:flu}$]
Using  $(\ref{lem:mu})$  we can compute the moment generating function of $\sqrt{N}m(\s)$. For a more general result we consider
\begin{eqnarray}
\psi_{u,N}(M_1,M_2)&=& \E_\s\left[Z^{-1}e^{\b N u(\frac{\sum_{i=1}^N\s_i +\left(\sqrt{N}\frac{M_2}{\b u''(0)}\right)}{N})}e^{M_1\frac{\sum_{i=1}^N\s_i}{\sqrt{N}}}\right]\nn\\
&=&Z^{-1}\int d\mu^{\beta}_{u,N}(s) e^{\frac{sM_2}{\b u''(0)}}\E_\s\left[e^{\left(\frac{s+M_1}{\sqrt{N}}\right)\sum_{i=1}^N\s_i}\right]\nonumber\\
&=& Z^{-1}\int d\mu^{\beta}_{u,N}(s)e^{\frac{sM_2}{\b u''(0)}} e^{N\phi\left(\frac{s+M_1}{\sqrt{N}}\right)}
=Z^{-1}\int d\mu^{\beta}_{u,N}(s)e^{\frac{sM_2}{\b u''(0)}} e^{\frac{(s+M_1)^2}{2}}+O(1/N)\nn\,.
\end{eqnarray}
As soon as $\int d\mu^{\beta}_{u,N}(s) e^{\frac{s^2}{2}}<\infty$ holds, \ie, from $(\ref{lem:mu})$, for $\beta\in[0,\b_c)$, and using $(\ref{lem:mu1})$ 
\begin{eqnarray}\label{eq:fg}
\psi_{u,N}(M_1,M_2) &\stackrel{N\to\infty}{\longrightarrow}& 
e^{\frac{M_1^2}{2}}\ Z^{-1}\int ds\  e^{\frac{s^2}{2}\left(\frac{1-\b u''(0)^2}{\b u''(0)^2}\right)} e^{s\left(M_1+\frac{M_2}{\b u''(0)}\right)}\nonumber \\
&=&Z^{-1}e^{\frac{1}{2}\left(\frac{1}{1-\b u''(0)^2}\right)(M_1^2+M_2^2)+\left(\frac{\b u''(0)}{1-\b u''(0)^2}\right)M_1M_2}\,.
\end{eqnarray}
Thus $\psi_{u,N}(M_1,M_2)$ is the moment generating function of a bivariate Gaussian random variable. For $M_2=0$ we get the generating function of $\sqrt{N}m(\s)$, whence we conclude that its distribution must be a centred Gaussian with variance $\chi(\b)$.
\end{proof}

Lastly, at the critical point we have
\begin{proposition}\label{pr:fluatc}
Let $\beta=\beta_c$, $h=0$,  and 
$$
P^u_4:=\left[\frac{d^4}{dx^4}\log\left(\int\nu_u^\b(dy)e^{xy}\right)\right]_{x=0}\,.
$$
Moreover let $\mathcal{X}_c$ a r.v. with probability density given by $$\frac1Ze^{(P_4+P_4^u)\frac{x^4}{4!}}$$ ($Z$ is a normalisation coefficient). It holds
\be 
N^{1/4}m_N(\sigma)\overset{d}{\longrightarrow} \mathcal{X}_c\,.
\ee
\end{proposition}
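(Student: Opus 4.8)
The plan is to follow the moment generating function computation of Proposition \ref{pr:flu}, but at the critical scaling $N^{1/4}$ forced by the divergence of the susceptibility at $\b_c$. First I would write, for the single order parameter,
$$
\psi_N(M)=Z^{-1}\mathbb{E}_\s\left[e^{N^{1/4}Mm+\b_c N u(m)}\right]\,,
$$
and linearise the interaction through the dual representation (\ref{lem:mu}), applied with $x=\sqrt N\,m=\frac{\sum_i\s_i}{\sqrt N}$, that is $e^{\b_c N u(m)}=\int\mu^{\b_c}_{u,N}(dy)\,e^{y\sum_i\s_i/\sqrt N}$. Interchanging the $\s$--average with the $y$--integral and using $\mathbb{E}_\s e^{t\sum_i\s_i}=e^{N\phi(t)}$ yields
$$
\psi_N(M)=Z^{-1}\int\mu^{\b_c}_{u,N}(dy)\,e^{N\phi\left(\frac{M}{N^{3/4}}+\frac{y}{\sqrt N}\right)}\,,
$$
which is structurally identical to the expression appearing in Proposition \ref{pr:flu}, save for the field being rescaled by $N^{1/4}$ instead of $\sqrt N$.

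The heart of the argument is the change of variable $y=N^{1/4}v$, the scale at which the quartic terms become $O(1)$. Expanding $\phi(t)=\frac{t^2}{2}+\frac{P_4}{4!}t^4+\ooo{t^4}$ at $t=\frac{M}{N^{3/4}}+\frac{v}{N^{1/4}}$, the quadratic part produces $\frac{N^{1/2}v^2}{2}+Mv+\ooo{1}$ (the mixed term is exactly $Mv$ and the pure--$M$ term is $\ooo{1}$), while the quartic part contributes $\frac{P_4}{4!}v^4+\ooo{1}$, all higher cumulants being $\ooo{1}$. Writing $y^2=N^{1/2}v^2$, the obstruction is thus the diverging factor $e^{y^2/2}$, which must be reabsorbed by the measure.

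To this end I would establish the moderate--deviation asymptotics of $\mu^{\b_c}_{u,N}$ in the window $y\sim N^{1/4}$. Its cumulant generating function is $\Lambda_N(t)=N\b_c u(t/\sqrt N)=\frac{t^2}{2}+\frac{P^u_4}{4!}\frac{t^4}{N}+\cdots$, with $P^u_4=\b_c u^{(4)}(0)$ and variance $\b_c u''(0)=1$; a saddle point (equivalently Edgeworth/local limit) evaluation of its density gives $\rho_N(y)=\frac{1}{\sqrt{2\pi\,\Lambda_N''(\hat t)}}\,e^{-\Lambda^*_N(y)}(1+\ooo{1})$ with $\Lambda_N''(\hat t)\to1$ and $\Lambda^*_N$ the Legendre transform of $\Lambda_N$. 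A short computation of $\Lambda^*_N$ at $y=N^{1/4}v$ yields
$$
\Lambda^*_N(N^{1/4}v)=\frac{y^2}{2}-\frac{P^u_4}{4!}v^4+\ooo{1}\,,
$$
the remainder being $O(v^6/\sqrt N)$. Hence $\rho_N(N^{1/4}v)\,e^{y^2/2}\to\frac{1}{\sqrt{2\pi}}\,e^{P^u_4v^4/4!}$: the Gaussian decay of $\mu^{\b_c}_{u,N}$ exactly cancels the divergence coming from the quadratic part of $\phi$, and the fourth cumulant $P^u_4$ of the interaction survives as a quartic exponent.

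Combining the two displays, the integrand converges (the Jacobian $N^{1/4}\,dv$ being reabsorbed into the normalisation), so that
$$
\psi_N(M)\ \longrightarrow\ Z^{-1}\int dv\,e^{\frac{(P_4+P^u_4)v^4}{4!}+Mv}\,,
$$
which is the moment generating function of $\mathcal X_c$; the integral converges since $P_4<0$ by (\ref{cond:CW-subGauss}) and $P^u_4\leq0$ by $u(x)\leq u''(0)x^2/2$, whence $P_4+P^u_4<0$. Fixing $Z$ through $\psi_N(0)=1$ and invoking convergence of moment generating functions on a neighbourhood of the origin gives the claimed convergence in distribution. The main obstacle is precisely this third step: since $\mu^{\b_c}_{u,N}$ converges weakly to $\mathcal N(0,1)$, the quantity $\int e^{y^2/2}\mu^{\b_c}_{u,N}(dy)$ diverges and no naive passage to the limit is possible. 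One must control $\rho_N$ uniformly in a moderate--deviation window $|y|\lesssim N^{1/4+\e}$ (a refined local limit theorem) and dominate the complementary tail $|y|\gg N^{1/4}$ by means of the uniform convexity bound $u(x)\leq u''(0)x^2/2$ together with the sub--Gaussian assumption (\ref{cond:CW-subGauss}), in the spirit of the tail estimate (\ref{eq:stimaL2L2Linf}), so as to justify dominated convergence.
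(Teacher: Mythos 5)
Your proposal is correct and follows essentially the same route as the paper's proof: the moment generating function of $N^{1/4}m_N$ is linearised through the dual measure $\mu^{\b_c}_{u,N}$, rescaled by $N^{1/4}$, $\phi$ is expanded so that the quadratic part produces the diverging factor $e^{\sqrt N z^2/2}$ and the quartic cumulant $P_4$ survives, and the key fact is that $\mu^{\b_c}_{u,N}(dz\,N^{1/4})e^{\sqrt N z^2/2}$ tends to Lebesgue measure times $e^{P^u_4 z^4/4!}$. The only difference is that the paper invokes this last step directly as a moderate-deviation consequence of the CLT (citing Petrov, VIII.2), whereas you sketch its derivation via the Legendre transform of $\Lambda_N(t)=N\b_c u(t/\sqrt N)$ and explicitly flag the tail-domination issue the paper leaves implicit.
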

\begin{remark}
The condition $u(x)\leq u''(0)x^2/2$ yields $P_4^u<0$, while $P_4<0$ by our initial assumptions.
\end{remark}
\begin{proof}
As in the proof of Proposition $\ref{pr:flu}$, we compute the moment generating function of $N^{1/4}m_N(\sigma)$
\begin{eqnarray}
\psi_u(M)&=& \E_\s[Z^{-1}e^{\b_c N u\left(\frac{\sum_{i=1}^N\s_i}{N}\right)}e^{M\frac{\sum_{i=1}^N\s_i}{N^{3/4}}}]=
Z^{-1}\int \mu^{\beta_c}_{u,N}(dz) \E_\s[e^{\left(\frac{s}{\sqrt{N}}+\frac{M}{N^{3/4}}\right)\sum_{i=1}^N\s_i}]\nonumber\\
&=& Z^{-1}\int \mu^{\beta_c}_{u,N}(dz) e^{N\phi\left(\frac{z}{\sqrt{N}}+\frac{M}{N^{3/4}}\right)}
= Z^{-1}\int \mu^{\beta_c}_{u,N}(dz N^{1/4}) e^{N\phi\left(\frac{z}{N^{1/4}}+\frac{M}{N^{3/4}}\right)}\nonumber \\
&=& Z^{-1}\int \mu^{\beta_c}_{u,N}(dzN^{1/4})e^{\sqrt{N}\frac{z^2}{2}} e^{zM} e^{\frac{P_4}{4!}z^4}+ o(1)\nonumber \\
&\stackrel{N\to\infty}{\longrightarrow}& Z^{-1}\int dz e^{zM} e^{\frac{P_4+P^u_4}{4!}z^4}.
\end{eqnarray}
In the last line the crucial observation is that $\mu^{\beta_c}_{u,N}(ds N^{\frac14})e^{\sqrt{N}\frac{z^2}{2}}$ tends weakly to the Lebesgue measure times $e^{\frac{P^{u}_4z^4}{4!}}$, as a consequence of the moderate deviations from central limit theorem (see \cite{pr}, VIII.2). 
\end{proof}

\vspace{1cm}

\end{document}